\newtheorem{remark}{Remark}
\renewcommand{\mathbf}{\boldsymbol}
\renewcommand{\thepage}{}
\newcommand{\weight}{\gamma}
\newcommand{\uw}{\underline{w}}
\newcommand{\uww}{\underline{W}}
\newcommand{\uyy}{\underline{Y}}
\newcommand{\mmw}{\mathbb{W}}
\newcommand{\indep}{\perp\!\!\!\perp}
\newcommand{\assignment}{{\rm design}}
\newcommand{\double}{{\rm dr}}
\newcommand{\mbw}{\mathbf{W}}
\newcommand{\mby}{\mathbf{Y}}
\newcommand{\mme}{\mathbb{E}}
\newcommand{\ow}{\overline{W}}
\newcommand{\outcome}{{\rm outc}}
\newcommand{\fe}{{\rm fe}}
\newtheorem{assumption}{Assumption}[section]
\newtheorem{theorem}{Theorem}
\newtheorem{lemma}{Lemma}
\newtheorem{prop}{Proposition}
\def\monthname{\ifcase\month\or
  January\or February\or March\or April\or May\or June\or July\or
  August\or September\or October\or November\or December\fi}
\numberwithin{equation}{section}
\DeclareMathOperator*{\argmin}{arg\,min}
\def\monthname{\ifcase\month\or
January\or February\or March\or April\or May\or June\or
July\or August\or September\or October\or November\or December\fi}
\renewcommand{\appendix}{\small\parindent 0cm\setcounter{equation}{0}
\renewcommand{\theequation}{A.\arabic{equation}}
\setcounter{lemma}{0}\renewcommand{\thelemma}{A.\arabic{lemma}}
\setcounter{theorem}{0}\renewcommand{\thetheorem}{A.\arabic{theorem}}}
\begin{document}

\title{\textbf{Double-Robust Identification for Causal Panel Data Models}\thanks{{\small This paper benefited greatly from our discussions with Manuel Arellano, St\'{e}phane Bonhomme, and David Hirshberg. We are grateful for comments from seminar participants at CERGE-EI, University of Chicago, University of Georgia, Princeton University, and various conferences.
This research was generously supported
by ONR grant N00014-17-1-2131. }} }
\author{Dmitry  Arkhangelsky \thanks{{\small  Associate Professor, CEMFI, darkhangel@cemfi.es. }} \and Guido W. Imbens\thanks{{\small Professor of
Economics,
Graduate School of Business and Department of Economics, Stanford University, SIEPR, and NBER,
imbens@stanford.edu.}} }
\date{\ifcase\month\or
January\or February\or March\or April\or May\or June\or
July\or August\or September\or October\or November\or December\fi \ \number%
\year\ \  (First version September 2019)}
\maketitle\thispagestyle{empty}

\begin{abstract}
\singlespacing
\noindent We study identification and estimation of causal effects in settings with panel data. 
Traditionally researchers follow model-based identification strategies relying 
 on assumptions governing the relation between the potential outcomes and the observed and unobserved confounders. We focus on a different, complementary,
approach to identification where assumptions are made about the relation between the treatment assignment and the unobserved confounders. Such strategies are common in cross-section settings but have rarely been used with panel data.
We introduce different sets of assumptions that follow the two paths to identification, and develop a double robust approach. We propose estimation methods that build on these identification strategies.
\end{abstract}

\noindent \textbf{Keywords}: fixed effects, cross-section data, clustering, causal effects, treatment effects, unconfoundedness.

\begin{center}
%\textsc{ Preliminary--Comments Welcome}
\end{center}

%\textbf{JEL Classification: C14, C21, C52}

%\textbf{Keywords:\ Regression Analyses,  Random Assignment, Randomized Experiments, Potential Outcomes, Causality}

\baselineskip=20pt\newpage
\setcounter{page}{1}
\renewcommand{\thepage}{\arabic{page}}
\renewcommand{\theequation}{\arabic{section}.\arabic{equation}}

\section{Introduction}
\label{section:intro}
\setcounter{equation}{0}

Panel data are widely used to estimate causal effects of policy interventions on economic outcomes. Such data are particularly useful in settings where researchers are concerned with  the presence of unobserved time-invariant confounders that invalidate simple comparisons between treated and control  outcomes. With $Y_{it}$  denoting the outcome of interest for unit $i$ (for $i=1,\ldots,N$) in period $t$ (for $t=1,\ldots,T$), the general setup we consider is
\begin{equation} 
Y_{it}=g_t(W_{it},U_i,X_{it},\varepsilon_{it}),
\end{equation}
with
 $W_{it}$  an indicator for the treatment, $U_i$ the unobserved confounder, $X_{it}$ the observed attributes/confounders/covariates,
 and $\varepsilon_{it}$ an independent idiosyncratic error term.
 %that is potentially correlated with $W_{it}$. 
The possibility that $U_i$ may be correlated with 
 %both $\varepsilon_{it}$ and 
 $W_{it}$ even after controlling for observed confounders prevents us from estimating the average effect of $W_{it}$ on the outcome by comparing covariate-adjusted treated and control outcomes.
  
The fundamental challenge in estimating treatment effects with panel data is that we wish to adjust flexibly both for differences between units and differences between periods. Making each of these adjustments separately is simple. For example, we can control for differences between units by comparing outcomes for treated and control periods for the same unit. Similarly, we can eliminate the differences between time periods by comparing treated and control outcomes in the same period. However, we cannot make both of these adjustments simultaneously because there is no variation in the treatment for a given unit $i$ in a given period $t$. As a result, there has to be some compromise, and we need to resort to comparisons that adjust only partially for differences between units and periods. 

Often such compromise is reached by imposing functional form restrictions on the outcome model $g_t(\cdot)$ that allow us to remove  dependence on $U_i$. A common approach in empirical work relies on a linear additive two-way fixed effect specification for the outcome model,
\begin{equation}\label{eq:fe_specif}
g_t(w,u,x,e)=\alpha(u)+\lambda_t+w\tau+x^\top\beta+e,
%Y_{it} = \alpha(U_i) + \lambda_t +  W_{it} \tau+X_{it}^\top\beta +\varepsilon_{it},%\hskip1cm\mathrm{with}\ 
%\varepsilon_{it}\indep \{(W_{il},X_{il})\}_{l=1}^T.
\end{equation}
in combination with $\varepsilon_{it}\indep \{(W_{il},X_{il})\}_{l=1}^T$,
so that the parameters, including the coefficient of interest $\tau$, can be estimated by least squares regression. Although widely used in practice, there  are concerns that the model relies on unreasonably strong assumptions. 

The first contribution of the paper is our focus on a different,  design-based, strategy. The basic omitted-variable-bias insight from the cross-section literature (going back to at least the 1960s, see \cite{goldberger1991course, angristpischke, wooldridge2010econometric} for  textbook discussions) implies that the bias from an unobserved confounder comes from 
the combination of its correlation with the outcome and its correlation with $W_{it}$. As an alternative, or complement, to building a model for the outcomes that restricts the dependence of the outcomes on the unobserved confounders, we can  therefore  build a model for the assignment mechanism to remove the bias. 
Let $\uww_i$ be the $T$-vector of assignments with typical element $W_{it}$. The modeling restrictions we consider are of the form
\begin{equation}\label{twee}
\uww_i \ \indep\  U_i\   \Bigl|\ S_i,
\end{equation}
where $S_i$ is a known function of $\uww_i$.  For example, $S_i$ may be equal to the average treatment assignment for unit $i$, $\ow_i=\sum_t W_{it}/T$. This would amount to the assumption that units with the same fraction of treated periods are comparable. The general strategy of removing the endogeneity by conditioning on a cluster-specific statistic goes back to  \cite{altonji2005cross}. Later in the paper we provide several examples of statistical and structural economic models that justify (\ref{twee}) for a particular choice of $S_i$.
If (\ref{twee}) holds, we can compare treated and control units as long as they have the same value for $S_i$.  This approach provides a middle ground between a nonparametric within-unit analysis of \cite{chernozhukov2013average} and standard cross-sectional techniques. In applications we expect $S_i$  to capture a substantial part of the unobserved heterogeneity even if (\ref{twee}) does not formally hold, making the doubly robust strategy we discuss next appealing.

 The second contribution of the paper is the insight that one can combine the two strategies, model-based and design-based, into a single, doubly robust identification strategy. Models such as (\ref{eq:fe_specif}) validate a particular set of comparisons between treated and control outcomes. The restriction in (\ref{twee}) validates a different set of comparisons between treated and control outcomes. In many cases we can focus on comparisons that are validated by both the model in (\ref{eq:fe_specif}) and the restriction in (\ref{twee}). Such comparisons remain meaningful as long as at least one of the models is correct. Note that the double robustness here refers to the identification, rather than to the estimation given a common set of assumptions as in much of the literature on double robustness (\citet{robins1,chernozhukov2018double}).

To implement our strategy we restrict attention to linear estimators that are widely used in economics and statistics (e.g., \cite{donoho1994statistical,armstrong2018optimal}). Our estimator has the following form:\begin{equation}    \hat \tau = \frac{1}{NT}\sum_{i,t}\gamma_{it}Y_{it}\end{equation}where researchers explicitly select the weights $\gamma_{it}$ by solving a quadratic optimization problem. The weights depend on the treatment indicators and possibly on covariates, but not on the outcome data. Our estimator remains consistent if either outcome or assignment model is correctly specified and is robust to arbitrary heterogeneity in treatment effects. We also provide an extension to general non-binary treatments.

Our strategy is not based on constructing consistent estimates for $U_i$ and then controlling for the estimated $\hat U_i$. In fact, in the  fixed $T$ case  we consider, it is  impossible to do that.  Instead, we leverage the fact that the distribution of $U_i$ is the same for units with different assignment paths $\uww_i$ as long as we restrict our attention to subpopulations defined by $S_i$. This emphasizes the practical role that design assumptions can play in models with unobserved heterogeneity.
We can re-interpret the conventional two-way fixed effect estimator in this framework. Specifically, it can be viewed as comparing treated and control units at the same point in time, within the set of units with the same fraction of treated periods, that is, conditioning on $S_i=\sum_{t=1}^T W_{it}/T$ (e.g., \cite{mundlak1978pooling}). However, as we show by a simple example, the two-way fixed effect estimator is not doubly robust, and our proposed estimator explicitly addresses this problem.

\section{Literature}
\setcounter{equation}{0}

The two-way additive fixed effect structure in  (\ref{eq:fe_specif})  has a long history in applied economics (going back at least to \cite{mundlak1961empirical, hoch1962estimation, mundlak1965consequences}) and econometrics (e.g., \citet{chamberlain1984panel,chamberlain1992efficiency, arellano2011identifying,graham2012identification,chernozhukov2013average,freyberger2018non}). The original motivation comes from an underlying structural model that describes the link between unobserved factors and outcomes. The relationship between outcomes and assignments is often left unrestricted, except for high-level exogeniety conditions (see \cite{arellano2003panel} for a discussion). 

Restrictions like (\ref{twee}) are different in spirit and can be justified by making additional assumptions on the relationship between $\uww_i$ and $U_i$, or, in other words, by formulating an assignment model. This type of restrictions on the assignment mechanism are at the heart of the ``credibility revolution'' in applied economics (\cite{angrist2010credibility}) that emphasizes the role of the research design. Many strategies  currently used in empirical work for causal inference with cross-section data are based on such design assumptions (see \citet{angristpischke,currie2020technology} for evidence on this). Although  less common than outcome modeling in panel data settings, this design-based approach has been used to achieve identification in settings with grouped data, e.g., the exchangeability assumption in \cite{altonji2005cross}, or the exponential family assumption in \cite{arkhangelsky2018role} (see also \cite{borusyak2020non}). 

We are making two contributions to this literature. First, building on the research on binary panel models (e.g., \cite{honore2000panel,chamberlain2010binary,aguirregabiria2018sufficient}), we show how to use design-based assumptions to identify treatment effects in this setting. Second, we propose a doubly robust identification strategy that combines the models for outcomes and assignments and remains valid if either of them is correctly specified. In practice, this means that applied researchers can directly exploit information about economic mechanisms behind different patterns in $\uww_i$, without abandoning familiar outcome models such as (\ref{eq:fe_specif}).

This paper is also directly related to recent causal panel literature that focuses on two-way fixed effect estimators (\cite{de2018two, callaway2018difference, sant2020doubly, goodman2021difference, athey2022design}). Similarly to these papers, we use the two-way structure to model the baseline outcomes. However, our focus is quite different. First,  we consider general designs, whereas the previous research has been focused on either block case or staggered adoption. Second, we develop a new estimator, that directly exploits both an assignment and an outcome model. Finally,  our identification strategy can be applied more broadly. In particular, 
although we focus on the two-way model for the baseline outcomes, the strategy
 can be extended to general factor models, thus connecting to the literature on synthetic control (e.g.,
\cite{Abadie2010, xu2017generalized, athey2017matrix, Abadie2010, ben2018augmented, arkhangelsky2021synthetic, chernozhukov2019inference}).

Our analysis focuses on static models, assuming away any dynamic effects of past treatments on contemporaneous outcomes. This is not to say that dynamic effects are not common or important in empirical practice. They are, and there are important conceptual issues in that setting that are not present in the current one we consider. Still, we focus on the static case for three reasons. First, the presence of dynamic effects does not eliminate the issues we raise in the current paper, and we feel that they are exposited most clearly in this setting. Second, there are many applications where the dynamic effects are second order. For example, for analyzing the impact of promotions on purchases of non-durables, the dynamic effects are likely to be modest. More generally, this holds when outcomes measure activity over a short period of time  (e.g., days or weeks) and are measured far apart in time (e.g., years). 
%Finally, this makes our setup fit in with much of the recent difference-in-differences literature (\cite{de2018two, callaway2019difference, sant2020doubly, goodman2017difference, atheystaggered}).
Finally, the main challenge in allowing for dynamic effects is the need for a model for the dynamic component. Current applied work does not provide a commonly accepted benchmark for such a model, making a doubly robust strategy infeasible. Nevertheless, our design-based approach remains valid regardless of the presence or absence of dynamic effects and can be directly used as input for propensity score estimators introduced in \cite{bojinov2021panel}.

\section{Setup}\label{sec:setup}
\setcounter{equation}{0}
We have access to a panel data set with $N$ units observed over $T$ periods ($i$ and $t$ being a generic unit and period, respectively). We focus on asymptotic approximations based on large $N$ and fixed $T$. We are interested in the effect of a binary policy variable $w$ on some economic outcome $Y_{it}$. Later we discuss settings with more general treatments or policies. To formalize this we consider a potential outcome framework (\citet{rubin1974estimating, imbens2015causal}). The policy or treatment  can change over time, and so is indexed by unit $i$ and time $t$, $W_{it}\in\{0,1\}$. Let $\uw^t\equiv (w_1,w_{2},\ldots,w_t)$ denote the sequence of treatment exposures up to time $t$,  with $\uw$ as shorthand for the full vector of exposures $\uw^T$.  
Define $\uww_{i} \equiv (W_{i1},\dots, W_{iT})$  to be the full  assignment vector for unit $i$.  For the first part of the paper we abstract from the presence of additional unit-level covariates $X_i$. We explicitly account for their presence in Section \ref{sec:est_inf}.

Let $Y_{it}(\uw^t)$ denote the potential outcome for unit $i$ at time $t$, given treatment history up to time $t$:
\begin{equation}
Y_{it}(\uw^t)\equiv Y_{it}(w_1,w_{2},\dots, w_{t}).
\end{equation}
 In this paper we consider a static version of this general model and make the following assumption.
\begin{assumption}{\sc(No Dynamics)}\label{as:static}
For arbitrary $t$-compponent assignment vectors $\uw$ and $\uw^{\prime}$ such that the period $t$ assignment is the same, $w_{t} = w^{\prime}_{t}$ the potential outcomes in period $t$ are the same:
\begin{equation}\label{st_mod}
Y_{it}(\uw) = Y_{it}(\uw^{\prime}).
\end{equation}
\end{assumption}
This assumption is motivated by two observations. First,  in many economic applications dynamic effects are modest relative to the contemporaneous effects, making the static model a reasonable approximation and a common choice in empirical work. Second, while our insights are relevant for the general case, the conceptual issues are more easily exposited in this simpler setting.
Crucially, Assumption \ref{as:static} does not restrict heterogeneity in contemporaneous treatment effects in any way.

Given the  no-dynamics assumption we can index the potential outcomes by a single binary argument $w$, so we write $Y_{it}(w)$, for $w\in\{0,1\}$. 
Define 
\begin{equation}
    \uyy_i(\uw) \equiv (Y_{i1}(w_1),\dots, Y_{iT}(w_T))
\end{equation}
to be the $T$-component vector of potential outcomes.  In this setup we are interested in various treatment effects. The main building block for such average treatment effects is the  individual and time-specific treatment effect:
\begin{equation}
\tau_{it}\equiv Y_{it}(1) - Y_{it}(0)
\end{equation}
We focus primarily on identification and estimation of average treatment effects, typically a convex combination of individual effects $\tau_{it}$.

We 
%restrict our attention to settings with strictly exogenous covariates (e.g., \cite{arellano2003panel}) and 
make the following assumption:
\begin{assumption}{\sc(Latent Unconfoundedness)}\label{as:unc}
There exist a random variable $U_i\in \mathbb{R}^d$ such that the following conditional independence holds:
\begin{equation}
\uww_i  \indep \ \Bigl\{\uyy_{i}(\uw)\Bigr\}_{\uw}\ \Bigl|\ U_i
\end{equation}
\end{assumption}
We do not restrict $U_i$ to be scalar, nor do we restrict the functional form of the relation between $\underline{W}_i$ and $U_i$, or the relation between $Y_i(\underline{w})$ and $U_i$.

This assumption implies that once we control for the (unobserved, and potentially vector-valued) $U_i$ all the differences in the treatment paths $\uww_i$ across units are unrelated to the potential outcomes. It is natural when $W_{it}$ is either driven by some (quasi)-experimental shocks or is a choice variable that is not used to optimize $Y_{it}$ directly. Examples of the first type are common in the applied literature, especially when the shocks are aggregate, but some units are more exposed to them than others (see Section \ref{sec:emp_il}). As an example of the second type, consider a situation where $W_{it}$ corresponds to national-level prices for product $i$ in period $t$ and $Y_{it}$ is a measure of sales in a particular local market. Unobserved quality $U_i$ makes $Y_{it}$ and $W_{it}$ correlated (over $i$), despite the fact that $W_{it}$ is not chosen to optimize $Y_{it}$ directly. Overall, the relationship between $W_{it}$ and $U_i$ that satisfies Assumptions \ref{as:unc} can arise for a variety of reasons and in the subsequent sections we show why it is useful to model it explicitly.

\section{Two Paths to Identification: An Example }\label{sec:bas_id}
\setcounter{equation}{0}

\subsection{Preliminaries}
Let $\mbw$ be the support of the vector of assignments $\uww_i$; we can think of $\mbw$ as a matrix with at most $2^T$ rows (because $W_{it}$ is binary)  and $T$ columns, where each row is an element of the support of $\uww_i$. Let $\mathbf{W}_k$ be a $k$ row of the matrix $\mbw$ -- a $T$-dimensional vector of zeros and ones. Let $\pi_k \equiv{\rm pr}(\uww_i=\mathbf{W}_k)= \mme\left[\mathbf{1}_{\uww_i = \mbw_k}\right]$. All $\pi_k$ are positive, otherwise the corresponding row of $\mbw$ can be dropped. Let $K\leq 2^T$ be the number of rows in $\mbw$.

For example, if $T=3$ then a possible form for $\mathbf{W}$ is:
\begin{equation}\label{example_dist}
\mathbf{W} = \begin{pmatrix}
0 & 0 & 0\\
1 & 0 & 1\\
0 & 1 & 1\\
1 & 1 & 1\\
\end{pmatrix}
\end{equation}
Each row of this matrix represents a possible assignment, and in this particular case only four  out of the eight $(2^3 = 8)$ possible combinations have positive probability. For a particular unit $i$, let $k(i)$ be the row $\mbw_k$ of the support matrix $\mbw$ such that $\mbw_{k(i)}=\uww_i$. For the identification argument we initially assume we know $\mbw$ and the assignment probabilities $\pi_k$. We consider the case with unknown $\pi$  in Section \ref{sec:est_inf}.

We are interested in estimating weighted averages of the treatment effects $\tau_{it}$. Following Assumption \ref{as:static} we have the following representation for the observed outcomes $Y_{it}$:
\begin{equation}
    Y_{it} = W_{it}Y_{it}(1) + (1-W_{it})Y_{it}(0),
\end{equation}
which we collect into $N\times T$ matrix $\mby$. Our estimators will be linear in $\mby$, with weights $\weight_{it}$:
\[ \hat\tau(\weight)=\frac{1}{NT}\sum_{i=1}^N\sum_{t=1}^T \weight_{it} Y_{it}.\]
For the class of estimators we consider the weights $\weight_{it}$ are a function of $\uww_i$ so that $\weight_{it} = \weight_t(\uww_i)$, satisfying the restrictions $\frac{1}{NT}\sum_{i,t} W_{it}\gamma_{it}=1$ and $\frac{1}{NT}\sum_{i,t} (1-W_{it})\gamma_{it}=-1$. Choosing an estimator  corresponds to choosing weight functions $\weight_t(\cdot)$ and vice versa. 

 \subsection{Doubly Robust Identification -- An Example}
  \begin{table}[t]
 \caption{Assignment process and weights}
 \label{table:exmp_1}
\centering
\begin{tabular}{rrrrrr}
  \hline
$k$ & $\mbw_k$ & $\pi_k$& $\weight^{(fe)}_{k1}$ & $\weight^{(fe)}_{k2}$ & $\weight^{(fe)}_{k3}$\\ 
  \hline
1& (0,0,0)&0.09 &  0.46 & -0.64 & 0.18 \\ 
2&  (1,0,0) & 0.04&  5.70 & -3.26 & -2.44 \\ 
3 & (0,1,0)&  0.11 &    -2.16 & 4.60 & -2.44 \\ 
4&  (1,1,0) & 0.14 &   3.08 & 1.98 & -5.07 \\ 
 5& (0,0,1)& 0.07 &   -2.16 & -3.26 & 5.42 \\ 
6& (1,0,1)&  0.08 &  3.08 & -5.88 & 2.80 \\ 
7& (0,1,1)&  0.15 &    -4.78 & 1.98 & 2.80 \\ 
8&  (1,1,1)& 0.32 &  0.46 & -0.64 & 0.18 \\ 
   \hline
\end{tabular}
\end{table}

We start with an example that illustrates the main message of the paper. Suppose that $T = 3$ and $K = 8 = 2^T$, so $\uww_{i}$ has full support. We assume that the distribution of $\uww_{i}$ in population is given by the $K$ probabilities $\pi_k$ in the third column of Table \ref{table:exmp_1}. Suppose that in fact the  potential outcomes $Y_{it}(w) $ satisfy a two-way-fixed-effect  structure and that the treatment effect is constant across time and units:
 \begin{equation}\label{eq:basic_fe_model}
\begin{aligned}
&Y_{it}(w) = \alpha(U_i) + \lambda_t + \tau w+ \varepsilon_{it},\\
&\mathbb{E}[\varepsilon_{it}|\uww_{i}, U_i] = 0.
\end{aligned}
\end{equation}
The fixed effect estimators uses least squares with two-way fixed effects to ``estimate'' $\tau$ in population. This procedure leads in large samples to a particular set of weights $\weight^{(\fe)}_{t}(\uww_i)$, and then to the following fixed effect estimand:
\begin{equation}
\tau^{\fe} =\mathbb{E}\left[\frac{1}{T}\sum_{t}Y_{it} \weight^{(\fe)}_{t}(\uww_i)\right].
\end{equation}
The expectation here is taken over the $U_i$, $\{\varepsilon_{it}\}_{t=1}^T$ and the assignment path $\uww_i$. If (\ref{eq:basic_fe_model}) is correctly specified then $\tau^{\fe}=\tau$ -- the common treatment effect -- but in general this equality will not hold.
For the distribution given above the weights implied by the fixed effect estimator are presented in the last three column of Table \ref{table:exmp_1}. 

By construction these weights sum up to $0$ for every row and every column. % (once re-weighted by the marginal probability of $\uww_i$).
But there are many weights that satisfy these restrictions. The fixed effect specification (\ref{eq:basic_fe_model}) implies that all of them lead to the same estimand, and it is therefore overidentified. The weights $ \weight^{(fe)}_t(\cdot)$ are selected for efficiency reasons, because under homoskedasiticity they lead to an estimator with the least possible variance. In practice, we can have concerns besides efficiency. In particular, we may be worried that the model  (\ref{eq:basic_fe_model}) is misspecified.

  \begin{table}[t]
 \caption{Aggregated weights}
 \label{table:exmp_2}
\centering
\begin{tabular}{rrrr}
  \hline
$\overline{W}_i$ &$\mathbb{E}[\weight^{(\fe)}_1(\uww_i)|\overline W_{i}] $&$\mathbb{E}[\weight^{(\fe)}_2(\uww_i)|\overline W_{i}] $& $\mathbb{E}[\weight^{(\fe)}_3(\uww_i)|\overline W_{i}] $ \\ 
  \hline
0 & 0.46 & -0.64 & 0.18 \\ 
1/3 & -0.73 & 0.60 & 0.13 \\ 
  2/3 & -0.08 & 0.36 & -0.28 \\ 
  1& 0.46 & -0.64 & 0.18 \\ 
   \hline
\end{tabular}
\end{table}

To illustrate this, suppose that DGP for the assignment mechanism $\uww_i$ has the following form (which is consistent with the probabilities in Table \ref{table:exmp_1}):
\begin{equation}\label{eq:basic_des_model}
\begin{aligned}
\text{$\forall (t,t^{\prime}$): } &W_{it} \indep W_{it^{\prime}} | U_i,\hskip1cm
&\mathbb{E}[W_{it}|U_i] = \frac{\exp(\alpha(U_i) + \lambda_t)}{1+\exp(\alpha(U_i) + \lambda_t)}.
\end{aligned}
\end{equation}
As we show in the following section, this pair of restrictions implies a conditional independence restriction:
\begin{equation}\label{eq:basic_design}
\uww_i  \indep \ \Bigl\{\uyy_{i}(\uw)\Bigr\}_{\uw}\ \Bigl|\ \overline W_i
\end{equation}
where $\overline{W}_i\equiv\sum_{t=1}^T W_{it}/T$ is the fraction of treated periods for unit $i$. This leads to a key insight. Suppose the two-way fixed effect outcome model (\ref{eq:basic_fe_model}) is misspecified, but the assignment mechanism (\ref{eq:basic_des_model}) is correctly specified. Also suppose that the treatment effect is constant, $\tau_{it} = \tau$. Then the  estimand $\tau^\fe$ may still be equal to the treatment effect, $(\tau^\fe = \tau)$, as long as the following condition on the weights is satisfied for all $t$ and $\overline W_{i}$:
\begin{equation}
\mathbb{E}[\weight^\fe_t(\uww_i)|\overline W_{i}] = 0 
\end{equation}
This restriction requires the  weights $\weight^{(\fe)}_t(\uww_i)$ to balance out any time-specific function of $\overline W_{i}$. If this restriction is not satisfied, then the differences in average outcomes for treated and control units may partly be attributed to the differences in the baseline outcomes $Y_{it}(0)$.

Table \ref{table:exmp_2} shows that this condition does not hold for the fixed effect weights from Table \ref{table:exmp_1}. 
In other words, although fixed effects weights  balance individual and time effects overall, they  do not necessarily do so within  subpopulations defined by $\overline{W}_i$. This might come as a surprise, because the two-way estimator can be interpreted as controlling for $\overline W_i$. As shown in \cite{mundlak1978pooling}, the fixed effects estimator $\hat\tau^{\fe}$ is numerically equivalent to the estimator based on the following linear regression:
\begin{equation}
    Y_{it} = \alpha + \lambda_t +\tau W_{it} + \eta \overline{W}_i + \tilde \epsilon_{it}
\end{equation}
As a result, when estimating $\tau^{\fe}$ we control for $\overline{W}_i$ only linearly and this is not enough to enforce the full balancing property.

As an alternative to the fixed effect weights one can  use the assignment model (\ref{eq:basic_design}) to construct inverse propensity (IP) weights (e.g., \cite{rosenbaum1983central}) that deliver the  treatment effect $\tau$ if the design model (\ref{eq:basic_design}) is correctly specified. However, such weights would not identify $\tau$ under the outcome model (\ref{eq:basic_fe_model}). To see this, observe that in general we have the following:
\begin{equation}
   \frac{1}{T}\sum_{t=1}^T \gamma^{(IP)}_{t}(\uww_i) =   \frac{1}{T}\sum_{t=1}^T\left(\frac{W_{it}}{\mathbb{E}[W_{it}|\overline W_i]} -  \frac{1-W_{it}}{\mathbb{E}[(1-W_{it})|\overline W_i]}\right) \ne 0,
\end{equation}
and thus IP weights do not balance individual fixed effects in (\ref{eq:basic_fe_model}).

The question now arises whether we can find the weights that deliver $\tau$ under  weaker conditions. In particular, we want the weights to deliver $\tau$ if either the fixed effect model (\ref{eq:basic_fe_model}) or the design process (\ref{eq:basic_design}) is correctly specified, but not necessarily both.  The answer is positive and is given in Table \ref{table:dr_weights}.
\begin{table}[t]
 \caption{Doubly robust weights}
 \label{table:dr_weights}
\centering
\begin{tabular}{rrrr}
  \hline
$(W_1,W_2,W_3)$&$\weight^{(dr)}_1(\uww_k)$ & $\weight^{(dr)}_2(\uww_k)$ & $\weight^{(dr)}_3(\uww_k)$\\ 
  \hline
(0,0,0)&0.00 & 0.00 & 0.00 \\ 
  (1,0,0)&6.59 & -3.95 & -2.64 \\ 
  (0,1,0)&-1.46 & 4.10 & -2.64 \\ 
 (1,1,0)& 3.24 & 1.66 & -4.90 \\ 
  (0,0,1)&-1.46 & -3.95 & 5.42 \\ 
 (1,0,1)& 3.24 & -6.39 & 3.15 \\ 
  (0,1,1)&-4.81 & 1.66 & 3.15 \\ 
 (1,1,1)& 0.00 & 0.00 & 0.00 \\ 
   \hline
\end{tabular}
\end{table}
It is evident that the weights some up to zero for each row and a simple calculation shows that $\mathbb{E}[\weight^{(dr)}_t(\uww_i)|\overline W_{i}] = 0$ for every $t$ and $\overline W_i$. As a result, there is no trade-off in terms of identification and we can construct the estimator that works under both models. There is some cost in terms of precision. If the fixed effect model is correct, the fixed effect estimator is more efficient in general.

So far we have assumed that the treatment effects are constant. It is well-documented that the two-way estimators have problems in cases with heterogenous treatment effects (e.g., see \cite{de2018two}). This is evident from looking at Table \ref{table:exmp_1}: in the last row we assign a negative weight to all treated units in the second period. To guarantee that this does not happen the following restriction can be enforced when the weights are constructed:
\begin{equation}
W_{it}\weight_t(\uww_i) \ge 0
\end{equation}
As we show in the next sections, this does not make the problem much harder computationally and the robust weights from Table \ref{table:dr_weights} are constructed with this restriction in mind.

\section{Two Paths to Identification: The General Case }\label{sec:full_id}
\setcounter{equation}{0}

In this section we present the main identification results.

\subsection{Identification Through the Outcome Model}

We focus on an outcome model with the following structure:
\begin{assumption}\label{as:chamberlain}
The potential outcomes satisfy:
\begin{equation}\label{chamb_model}
\mme[Y_{it}(w)|U_i] = \alpha(U_i) +\lambda_t+ \tau_t(U_i)w.
\end{equation}
\end{assumption}

Versions of this model with restricted heterogeneity in treatment effects have a long tradition in economic literature (starting at least from \cite{mundlak1961empirical}), and is used as a building block in much of the recent causal panel data literature (e.g., \cite{de2018two,callaway2018difference,abraham2018estimating,goodman2021difference,borusyak2016revisiting}). Below we provide a parsimonious characterization of the set of identified (weighted) average effects,  generalizing similar results established in the literature. In the next section we use this characterization to construct a robust estimator.

To identify a convex combination of $\tau_t(U_i)$  we consider the weights $\weight_{kt}$ that satisfy the following four restrictions:
\begin{equation}\label{eq:out_1}
 \begin{aligned}
&\frac{1}{T} \sum_{k=1}^K \sum_{t=1}^T \pi_k\weight_{kt} \mbw_{kt} = 1, \quad
 \forall k, \sum_{t}\weight_{kt}  = 0, \quad\\
&\forall t, \sum_{k=1}^{K} \pi_k \weight_{kt} = 0, \quad
\forall (t, k), \,\weight_{kt}\mbw_{kt} \ge 0.
 \end{aligned}
\end{equation}
These constraints are natural given the outcome model described above. The first and the last restriction insure that we focus on a convex combination of treatment effects. The second and the third restriction guarantee that weights balance out the systematic variation in the baseline outcomes $Y_{it}(0)$. We do not include the analogue of non-negativity constraint for control units, thus allowing for some extrapolation. Depending on application, one might want to impose such constraint as well.

Let $\mmw_\outcome$ be the set of weights $\{\weight_{kt}\}_{k,t}$ that satisfy these restrictions.
%We can evaluate these restrictions and thus we can construct this set. 
For any generic element $\weight \in \mmw_\outcome$ define  $\weight_{t}(\uww_i,\weight)$ to pick out the period $t$ weight for a unit with assignment path $\uww_i$:
\begin{equation}
\begin{aligned}
&\weight_{t}(\uww_i,\weight) \equiv \sum_{k=1}^{K} \weight_{kt} \mathbf{1}_{\uww_{i} = \mbw_{k}}\\
\end{aligned}
\end{equation}
Using these weights we define the following estimand:
\begin{equation}
\tau(\weight) = \mme\left[\frac{1}{T}\sum_{t=1}^TY_{it} \weight_{t}(\uww_i,\weight) \right]
\end{equation}

\begin{prop}\label{pr:id_model}
Suppose Assumptions \ref{as:static}, \ref{as:unc}, and \ref{as:chamberlain} hold, and that $\weight\in\mmw_\outcome$. Then
 $\tau(\weight)$ is a convex combination of $\tau_t(U_i)$ (over $i$ and $t$). 
\end{prop}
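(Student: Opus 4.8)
The plan is to condition on $U_i$ and peel the expectation apart using the three maintained assumptions in sequence. Write $r_k(U_i)\deff r^{\rm inf}(\mbw_k,U_i)=\pr(\uww_i=\mbw_k\mid U_i)$ for the infeasible conditional assignment probabilities, so that $\pi_k=\mme[r_k(U_i)]$. Since the observed outcome is $Y_{it}=Y_{it}(W_{it})$ and, on the event $\{\uww_i=\mbw_k\}$, we have $W_{it}=\mbw_{kt}$, the stochastic weighting rewrites as
\begin{equation}
Y_{it}\,\omega_{k_{(i)}t}=\sum_{k=1}^{K}\omega_{kt}\,\mathbf{1}_{\uww_i=\mbw_k}\,Y_{it}(\mbw_{kt}).
\end{equation}
First, I would take the expectation conditional on $U_i$ and invoke latent unconfoundedness (Assumption~\ref{as:unc}), which makes $\mathbf{1}_{\uww_i=\mbw_k}$ independent of the potential outcome $Y_{it}(\mbw_{kt})$ given $U_i$, so each summand factors as $\omega_{kt}\,r_k(U_i)\,\mme[Y_{it}(\mbw_{kt})\mid U_i]$.

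Next, I would substitute the Chamberlain outcome model (Assumption~\ref{as:chamberlain}), $\mme[Y_{it}(\mbw_{kt})\mid U_i]=\alpha(U_i)+\lambda_t+\tau(U_i)\mbw_{kt}$, and split the resulting expression into the $\alpha$, $\lambda_t$, and $\tau$ contributions. After averaging $\tfrac1T\sum_t$ and taking the outer expectation over $U_i$, the target is to kill the first two pieces and recognize the third as a convex combination of $\tau(U_i)$. The $\alpha$ piece carries the factor $\tfrac1T\sum_t\omega_{kt}$, which is zero by the row-centering constraint $\forall k,\ \tfrac1T\sum_t\omega_{kt}=0$, so it vanishes \emph{conditionally} on $U_i$. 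The $\tau$ piece equals $\mme[\tau(U_i)\,\nu(U_i)]$ with $\nu(U_i)\deff\sum_k r_k(U_i)\big(\tfrac1T\sum_t\omega_{kt}\mbw_{kt}\big)$; here $\nu(U_i)\ge 0$ because each $r_k(U_i)\ge0$ and $\tfrac1T\sum_t\omega_{kt}\mbw_{kt}\ge0$ by the nonnegativity constraint, while $\mme[\nu(U_i)]=\tfrac1T\sum_{k,t}\pi_k\omega_{kt}\mbw_{kt}=1$ by the normalization constraint, so the $\nu(U_i)$ are exactly convex weights.

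The step I expect to be most delicate is the $\lambda_t$ piece, $\tfrac1T\sum_t\lambda_t\sum_k\omega_{kt}r_k(U_i)$. Unlike the $\alpha$ piece, it does \emph{not} vanish pointwise in $U_i$: the relevant constraint $\forall t,\ \sum_k\pi_k\omega_{kt}=0$ is stated in terms of the \emph{marginal} weights $\pi_k$, not the conditional probabilities $r_k(U_i)$. The resolution is to take the outer expectation over $U_i$ \emph{before} applying the constraint, so that $\sum_k\omega_{kt}\mme[r_k(U_i)]=\sum_k\pi_k\omega_{kt}=0$. The bookkeeping I must get right is therefore the level at which each constraint acts: the two row-wise constraints ($\tfrac1T\sum_t\omega_{kt}=0$ and $\tfrac1T\sum_t\omega_{kt}\mbw_{kt}\ge0$) may be used conditionally on $U_i$, whereas the two $\pi_k$-weighted constraints mix rows and can be invoked only after integrating $U_i$ out. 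Combining the three pieces then yields $\tau(\omega)=\mme[\tau(U_i)\,\nu(U_i)]$ with $\nu(U_i)\ge0$ and $\mme[\nu(U_i)]=1$, which is the claimed convex combination of the $\tau(U_i)$.
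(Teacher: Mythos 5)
Your proof is correct and follows essentially the same route as the paper's: expand $Y_{it}$ via the Chamberlain model, use latent unconfoundedness to factor out the assignment indicators, kill the $\alpha(U_i)$ and $\lambda_t$ pieces with the row- and column-sum constraints, and recognize the remainder as $\mme[\tau(U_i)\nu(U_i)]$ with nonnegative weights integrating to one. Your explicit bookkeeping of which constraints act conditionally on $U_i$ versus only after marginalizing is a slightly more careful rendering of the same argument; the paper's weight $\xi(\uww_i)$ is exactly your $\nu(U_i)$ before taking the conditional expectation given $U_i$.
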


\begin{remark}
A certain convex combination of $\tau_t(U_i)$ can be identified whenever $\mathbb{W}_\outcome$ is non-empty. A necessary and sufficient condition for this is simple: the matrix $\mbw$ should contain at least one of the following two submatrices (up to permutation of columns):
\begin{equation}\label{eq:mat_model}
\begin{aligned}
\mbw_1= \begin{pmatrix} 0 & 1\\ 0 & 0\end{pmatrix},\ \ 
\mbw_2= \begin{pmatrix} 0 & 1\\ 1 & 0\end{pmatrix}.
\end{aligned}
\end{equation}
Both situations require presence of adopters (or ``movers'') of different types. The first one corresponds to the familiar difference-in-difference design: there are adopters $i$ of the treatment, and periods $t$ and $t'$  with $(W_{it}=0,W_{it'}=1)$ and in the same periods $t$ and $t'$ non-adopters $i'$ with $(W_{i't}=0,W_{i't'}=0)$. 
In the second case, there are adopters with $(W_{it}=0,W_{it'}=1)$ and units who switch out with $(W_{i't}=1,W_{i't'}=0)$. To put this discussion in perspective, it is not sufficient to have assignment matrices of the type
\[ \mbw_3=\begin{pmatrix} 0 & 0 \\ 1 & 1 \end{pmatrix},\ \ \mbw_4=\begin{pmatrix} 0 & 1 \\ 0 & 1 \end{pmatrix},\ \ \mbw_5= \begin{pmatrix} 1 & 1\\ 1 & 0\end{pmatrix},\]
where with the first design some units are always in the control group and all others are always in the treatment group, in the second design all units adopt the treatment at exactly the same time. The third design is more complicated, because at a first sight $\mbw_5$ looks very similar to $\mbw_1$; the key difference is that with $\mbw_1$ we have a control period and this allows us to deal with unobserved unit-specific differences. With $\mbw_5$ this is no longer feasible and we need to use negative weights that are disallowed. Standard two-way fixed effect  estimator treats $\mbw_1$ and $\mbw_5$ symmetrically and this is the reason why the standard estimand might be outside of the convex hull of treatment effects.  
\end{remark}

\subsection{Identification Through Design}
In this section we consider assignment processes that satisfy a certain sufficiency property. Here we state it  as a high-level assumption, and provide examples of economic models that satisfy this assumption in the next section.

For some  random variable $S_i \equiv S(\uww_i)$ let $r(\uw,s)$ be the generalized propensity score (\cite{imbens2000}):
\begin{equation}
 r(\uw,s)\equiv{\rm pr}(\uww_i=\uw|S_i=s).
\end{equation}
\begin{assumption}\label{as:suf}{\sc(Sufficiency)}
There exist a known $\uww_i$-measurable sufficient statistic $S_i\in \mathbb{S}$ and a subset $\mathbb{A}\subset\mathbb{S}$ such that:
\begin{equation}\label{eq32een}
\uww_i \indep U_i\ \Bigl|\ S_i=s,
\end{equation}
and for all  $s\in \mathbb{A}$:
\begin{equation}
\max_{\uw}\{ r(\uw,s)\} < 1.
\end{equation}
\end{assumption}
This assumption might look restrictive, but a known $S_i$ that satisfies the conditional independence restriction (\ref{eq32een}) always exists. The obvious choice is $S_i=\uww_i$ that satisfies it by construction. Of course, in that case the overlap condition cannot not be satisfied. For this assumption to hold we need to find  different  values for $\uww_i$ that generate the same distribution of $U_i$, but still exhibit variation in some $W_{it}$. For example, in addition to conditioning on the fraction of treated periods, $\overline{W}_i$, we may want to condition on the number of transitions, $\sum_{t=1}^{T-1} W_{it}(1-W_{it+1})$. In the next section we describe examples that show that sufficient statistics $S_i$ that satisfy this condition arise naturally in different empirical settings.

The main implication of the Assumption \ref{as:suf} is summarized by the following weak unconfoundedness proposition (\cite{imbens2000}):
\begin{prop}\label{pr:op_unc}{\sc (Weak Unconfoundedness)}
Suppose Assumptions \ref{as:static}, \ref{as:unc}, and  \ref{as:suf} hold. Then for any $\uw$:
\begin{equation}\label{ass_S}
\mathbf{1}_{\uww_i=\uw}  \indep \ \uyy_{i}(\uw)\ \Bigl|\ S_i.
\end{equation}
\end{prop}
This proposition demonstrates that unconfoundedness conditional on $U_i$ (\ref{twee}) can be transformed into undonfoundedness conditional on $S_i$ (\ref{ass_S}) under the additional assumption that restricts the assignment process.

Let $S_i$ be a potential sufficient statistic. 
%We assume here it has a discrete distribution.
Let $\mbw^s$ be a matrix representation of the support of $\uww_i$ conditional on $S_i = s$ and $\mbw^s_{k}$ be a generic row (element of the support). 
For example, if $S_i = \sum_{t} W_{it}/T$ and $\mbw$ is given by (\ref{example_dist}) then $S_i$ takes $3$ possible values and we have the following:
\begin{equation}
\mathbf{W}^{0} =\begin{pmatrix} 
0 & 0 & 0
\end{pmatrix},
\mathbf{W}^{2/3} = \begin{pmatrix} 
1 & 0 & 1\\
0 & 1 & 1\\
\end{pmatrix},
\mathbf{W}^{1} = \begin{pmatrix} 
1 & 1 & 1
\end{pmatrix}
\end{equation}

When considering identification strategy based on design assumptions we do not impose any additional restrictions on potential outcomes besides Assumptions \ref{as:static} and \ref{as:unc}. In this case, one can identify a convex combination of individual treatment effects using the weights that satisfy the following restrictions (for all $k,s$ and $t$):
\begin{equation}\label{design}
\frac{1}{T} \sum_{tk} \pi_{k}\weight_{kt} \mathbf{W}_{kt} = 1, \quad
\sum_{k: \mbw_k \in \mbw^s}\pi_k\weight_{kt} = 0,\quad
\sum_{k: \mbw_k \in \mathbf{W}^s}\pi_k\weight_{kt} \mbw_{kt} \ge 0,
\end{equation}
Let $\mmw_\assignment$ be the set of weights $\{\weight_{tk}\}_{t,k}$ that satisfy these restrictions. It is easy to see that $\mmw_\assignment$ is nonempty whenever there exists at least one value $s$ such that $\mathbf{W}^s$ contains at least two rows. This is guaranteed by the second part of Assumption \ref{as:suf}.   For any $\weight\in \mmw_\assignment$ define the $\weight_{t}(\uww_i,\weight)$ in the same way as before and consider the following expectation:
\begin{equation}
\tau(\weight) = \mathbb{E}\left[\frac{1}{T}\sum_{t=1}^TY_{it} \weight_{t}(\uww_i,\gamma) \right]
\end{equation}

\begin{prop} \label{pr:id_design}
Suppose Assumptions \ref{as:static}, \ref{as:unc}, and \ref{as:suf} hold, and that $\weight\in\mmw_\assignment$. Then $\tau(\weight)$ is a convex combination of treatment effects. 
\end{prop}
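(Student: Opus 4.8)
The plan is to evaluate $\tau(\omega)$ by iterated expectations, conditioning first on the sufficient statistic $S_i$, and then to show that within each stratum $\{S_i=s\}$ the restrictions defining $\mmw_\assignment$ both purge the control-outcome nuisance and certify that what remains is a nonnegative, correctly normalized average of conditional treatment effects. The engine is Proposition \ref{pr:op_unc}, which upgrades latent unconfoundedness (conditional on $U_i$) to feasible unconfoundedness conditional on $S_i$: for every $\uw$ one has $\mathbf{1}_{\uww_i=\uw}\indep\uyy_i(\uw)\mid S_i$. The premise $\omega\in\mmw_\assignment$ is also where Assumption \ref{as:over} and part $(ii)$ of Assumption \ref{as:suf} enter, guaranteeing that at least one stratum carries more than one row so that the weight set is nonempty.

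First I would fix a value $s$ and expand the inner conditional expectation. Since $S_i$ is $\uww_i$-measurable, on $\{S_i=s\}$ the assignment ranges over the rows of $\mbw^s$, and on $\{\uww_i=\mbw_k\}$ the no-dynamics assumption gives $Y_{it}=Y_{it}(\mbw_{kt})$ with stochastic weight $\omega_{kt}$. Writing $\mu_t(w,s)\deff\mme[Y_{it}(w)\mid S_i=s]$ and $r(\mbw_k,s)=\pr(\uww_i=\mbw_k\mid S_i=s)$, Proposition \ref{pr:op_unc} factors each term, yielding
\begin{equation*}
\mme\left[\frac1T\sum_{t=1}^T Y_{it}\,\omega_{k_{(i)}t}\;\Big|\;S_i=s\right]=\frac1T\sum_{t=1}^T\sum_{k:\,\mbw_k\in\mbw^s} r(\mbw_k,s)\,\mu_t(\mbw_{kt},s)\,\omega_{kt}.
\end{equation*}
The key bookkeeping observation, and the step I would treat most carefully, is that for $\mbw_k\in\mbw^s$ the event $\{\uww_i=\mbw_k\}$ already forces $S_i=s$, so the feasible propensity collapses to $r(\mbw_k,s)=\pi_k/p_s$ with $p_s\deff\pr(S_i=s)$. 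This identity is what allows the restrictions, stated in terms of the marginal probabilities $\pi_k$, to act directly on the conditional expectation.

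Next I would substitute the decomposition $\mu_t(\mbw_{kt},s)=\mu_t(0,s)+\mbw_{kt}\bigl(\mu_t(1,s)-\mu_t(0,s)\bigr)$, valid under no-dynamics, and split the sum into a control-outcome piece and a treatment-effect piece. The control piece carries the factor $\sum_{k:\,\mbw_k\in\mbw^s}\pi_k\omega_{kt}$, which vanishes by the within-stratum balance restriction $\sum_{k:\,\mbw_k\in\mbw^s}\pi_k\omega_{kt}=0$; this is the design analog of propensity-score balancing and is precisely where the nuisance $\mu_t(0,s)$ is eliminated. What survives is $\frac1{Tp_s}\sum_t\bigl(\mu_t(1,s)-\mu_t(0,s)\bigr)\sum_{k:\,\mbw_k\in\mbw^s}\pi_k\,\mbw_{kt}\,\omega_{kt}$, in which $\mu_t(1,s)-\mu_t(0,s)=\mme[\tau_{it}\mid S_i=s]$ is a conditional average treatment effect.

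Finally I would take the outer expectation over $S_i$; the factors $p_s$ cancel, and setting $\nu_{st}\deff\frac1T\sum_{k:\,\mbw_k\in\mbw^s}\pi_k\,\mbw_{kt}\,\omega_{kt}$ gives $\tau(\omega)=\sum_{s,t}\nu_{st}\,\mme[\tau_{it}\mid S_i=s]$. The restriction $\sum_{k:\,\mbw_k\in\mbw^s}\pi_k\omega_{kt}\mbw_{kt}\ge0$ makes each $\nu_{st}\ge0$, and because the strata $\{\mbw^s\}_s$ partition the rows of $\mbw$, summing $\nu_{st}$ over $s$ and $t$ reassembles the global normalization $\frac1T\sum_{t,k}\pi_k\omega_{kt}\mbw_{kt}=1$. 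Hence $\tau(\omega)$ is a convex combination of the conditional effects $\mme[\tau_{it}\mid S_i=s]$, each itself an average of individual effects $\tau_{it}$, so $\tau(\omega)$ is a convex combination of treatment effects. The only genuinely delicate points are the $r(\mbw_k,s)=\pi_k/p_s$ identity and the verification that the partition over $s$ turns the per-stratum weights into the single global normalization; the rest is routine substitution.
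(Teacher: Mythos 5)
Your proof is correct, and it is essentially the argument the paper intends: the paper omits this proof entirely, stating only that it is ``very similar'' to that of Proposition \ref{pr:id_model}, and your write-up is the natural completion of that sketch, replacing the outcome-model expansion by conditioning on $S_i$, invoking Proposition \ref{pr:op_unc} to factor the conditional expectation, and using the within-stratum balance, nonnegativity, and normalization restrictions exactly as the paper's weight definitions are designed to be used. The two points you flag as delicate --- the identity $r(\mbw_k,s)=\pi_k/p_s$ from $\uww_i$-measurability of $S_i$, and the reassembly of the per-stratum weights $\nu_{st}$ into the global normalization via the partition of rows --- are handled correctly.
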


\subsection{Examples of Assignment Models}\label{sec:examples}

In this section we consider various examples of assignment models. We show that Assumption \ref{as:suf} holds in a wide range of examples and discuss their applicability to various applied problems.

\subsubsection{Aggregate shocks}\label{sec:ag_shocks}

As a first case, we consider a situation, where $W_{it}$ is determined by an observed $p$-dimensional aggregate shock $\psi_t$, and idiosyncratic noise $\nu_{it}$. Units are affected by $\psi_t$ in a way that is determined by their unobserved $U_i$. Formally, we have the following model that includes a latent index:
\begin{equation}
\begin{aligned}
&W^{\star}_{it} = \theta_t+ \alpha_1(U_i)+\alpha_2^T(U_i)\psi_t +\nu_{it},\\ 
&\{\nu_{it}\}_{t} \indep  \{\alpha(U_i),\uyy_{i}(w)\} \\
&W_{it} = \mathbf{1}_{W^{\star}_{it} >0}
\end{aligned}
\end{equation}
Here $\alpha(U_i) = (\alpha_1(U_i),\alpha_2(U_i))\in \mathbb{R}^{p+1}$ captures the exposure of different units to aggregate shocks $\psi_t$ and $\nu_{it}$ represents an independent mean-zero idiosyncratic component. Because $W_{it}$ is binary, the model is nonlinear: instead of the latent index $W_{it}^{\star}$ we observe only $W_{it} =\mathbf{1}_{W^{\star}_{it} >0}$.

In this model endogeneity arises because the exposure $\alpha(U_i)$ can be correlated with the potential outcomes. This should be compared with a situation where $\alpha(U_i) = \alpha$, but $\nu_{it}$ can be correlated with $\uyy_{i}(w)$. In the latter case, $\psi_t$ can be used as an aggregate instrument. In our model the situation is different and we can use $\psi_t$ to control for $\alpha(U_i)$. To justify this formally we need to make two additional assumptions. First, we assume that $\nu_{it}$ is independent over $t$, second, we assume that $\nu_{it}$ has a logistic distribution. Together these two assumptions lead to the following model:
\begin{equation}\label{eq: logit_model}
\begin{aligned}
&\mathbb{E}[W_{it}| U_i] = \frac{\exp(\alpha_1(U_i) + \theta_t+\alpha_2^T(U_i)\psi_t)}{1+ \exp(\alpha_1(U_i) + \theta_t+ \alpha_2^T(U_i)\psi_t)},\\
&W_{it} \indep \{W_{il}\}_{l\ne t}\ \Bigl|\ U_i,
\end{aligned}
\end{equation}
which is a natural generalization of the familiar two-way logit model
\begin{equation}
\mathbb{E}[W_{it}| U_i] = \frac{\exp(\alpha_1(U_i)+ \theta_t)}{1+ \exp(\alpha_1(U_i)+\theta_t)}.\\
\end{equation}

Define the following statistic:
\begin{equation*}
S_i= \left(\sum_{t=1}^TW_{it}/T, \sum_{t=1}^T \psi_t W_{it}/T\right).
\end{equation*}
Intuitively, $S_i$  measures the exposure of unit $i$ to aggregate shocks. $S_i$ depends on observable quantities only and can be computed, and it is easy to show that the following independence condition holds:
\begin{equation}
U_i \ \indep\ \uww_i\ \Bigl|\ S_i
\end{equation}
and thus $S_i$ satisfies Assumption \ref{as:suf}.

Undoubtedly the assumptions that make $S_i$ a sufficient statistic are strong. In particular, the logistic distribution of $\nu_{it}$ is a functional form assumption on the unobserved idiosyncratic errors. Evaluation based on $S_i$ might be useful even if these assumptions are violated. First, one can interpret this strategy in the spirit of the recent work on grouped fixed effects (e.g., \cite{bonhomme2015grouped,bonhomme2017discretizing}) which is valid in large-$T$ regime under weak assumptions. Second, even if conditioning on $S_i$ does not remove all dependence between $\uww_i$ and $U_i$, we still expect it to capture essential aspects of the unobserved heterogeneity, making the doubly robust strategy that we propose next natural.

\subsubsection{Stationary dynamics}\label{sec:dyn_model}

In the previous example $W_{it}$ was mainly determined by aggregate exogenous shocks. In this section, we consider an opposite situation and assume that $W_{it}$ is mainly determined by its past. In particularly, suppose that $W_{it}$ satisfies the following first-order Markov model:
\begin{equation}\label{eq:alt_dyn_logit}
\begin{aligned}
&W_{it} \indep \{W_{il}\}_{l> t} \ \Bigl|\ U_i,W_{i}^{t-1},\\
&\mathbb{E}[W_{it}|U_i, W_i^{t-1}] = \mathbb{E}[W_{it}|U_i, W_{it-1}]\\
\end{aligned}
\end{equation}
Restricted versions of this model are commonly used to model dynamic binary choice decisions (see \cite{aguirregabiria2018sufficient} for examples). Let  $\pi_{i}(W_{it-1})\equiv\mathbb{E}[W_{it}|U_i, W_{it-1}]$ and observe that $\log\left(\frac{\pi_{i}(W_{it-1})}{1-\pi_{i}(W_{it-1})}\right)$ is a linear function of $W_{it-1}$:
\begin{equation}
\log\left(\frac{\pi_{i}(W_{it-1})}{1-\pi_{i}(W_{it-1})}\right) = \alpha(U_i) + \eta(U_i) W_{it-1}, 
\end{equation}
which can then be expressed in a familiar logit form:
\begin{equation}\label{eq:dynamic_model}
    \mathbb{E}[W_{it}|U_i, W_{it-1}] = \frac{\exp(\alpha(U_i) + \eta(U_i) W_{it-1})}{1+ \exp(\alpha(U_i) + \eta(U_i) W_{it-1})}.
\end{equation}

Define the following statistic:
\[S_i = \left(\sum_{t=2}^{T-1} W_{it}, \sum_{t=2}^{T} W_{it}W_{it-1}, W_{i1}, W_{iT}\right).\] 
It is not hard to show that, as long as conditions (\ref{eq:alt_dyn_logit}) are satisfied, $S_i$ is a sufficient statistic that satisfies Assumption \ref{as:suf}. In fact, the discussion above shows that  (\ref{eq:dynamic_model}) is equivalent to (\ref{eq:alt_dyn_logit}). This means that contrary to our previous example, in this model, sufficiency does not follow from a functional form assumption.

\subsubsection{Discussion}

Examples from Sections \ref{sec:ag_shocks} and \ref{sec:dyn_model} illustrate two different empirical settings in which one can use our approach. The first example emphasizes the role of exogenous aggregate shocks that are frequently used in applied literature to identify policy effects (e.g., \cite{duflo2007dams,dube2013commodity,nunn2014us, nakamura2014fiscal}). Our approach is applicable as long as the primary reason for endogeneity is differential exposure of different units to these shocks. The second example emphasizes the role of the structural assumptions, such as Markov restrictions, thus providing a way of combining structural choice models with the estimation of treatment effects.

From the formal point of view all the models considered above share the same structure. In all of them the conditional distribution of $\uww_i$ has the following representation:
\begin{equation}
\log\left(\mathbb{P}(\uww_i|U_i)\right) = S(\uww_i)^\top\alpha(U_i) + \beta(U_i) + \eta(\uww_i)
\end{equation}
In other words, the distribution of $W_i$ belongs to an exponential family, with $S(\uww_i)$ begin the sufficient statistic.  Sufficiency arguments have a long tradition in binary panel models literature (e.g., \cite{andersen1970asymptotic, chamberlain1984panel,honore2000panel,chamberlain2010binary}) where they have been used to obtain consistent estimators for common parameters. More recently, sufficiency arguments were used by \cite{aguirregabiria2018sufficient} to identify common parameters in dynamic structural models. We are using sufficiency differently: instead of using $S_i$ to identify common parameters, we use it to condition on unobserved heterogeneity, extending the results from \cite{arkhangelsky2018role} to panel models.

\section{Double robustness}\label{sec:dr}

In this section we build on our previous results and present a doubly robust identification argument. We then propose a natural algorithm that implements our strategy.

\subsection{Identification}

The sets of $\mmw_\outcome$ and $\mmw_\assignment$ described in Section \ref{sec:full_id} are motivated by different models and in general do not need to be similar. The first set is built with  within-unit across-period comparisons in mind, while the second one is based on within-period across-unit comparisons. The non-negativity constraints are also different: in $\mmw_\outcome$ we require every treated unit to have a non-negative weight, while in $\mmw_\assignment$ this property only holds for the weights averages within a subpopulation described by $S_i$. However, the sets may have common elements. %Nevertheless, these sets are not entirely different, because $\mmw_\outcome\cap\mmw_\assignment$ can be non-empty. 
In that case  one does not need to take a stand on what comparisons to use. Instead of using those validated by the outcome model or by the assignment model, one can use those that are validated by both.

Let $\mmw_\double = (\mmw_\outcome\cap\mmw_\assignment)$ and observe that combining the restrictions in (\ref{eq:out_1}) and (\ref{design}) we get that any $\weight \in \mmw_\double$ satisfies the following restrictions:
\begin{equation}
\mathrm{Target:}\, \frac{1}{T} \sum_{tk} \pi_k\weight_{kt} \mathbf{W}_{kt} = 1, \end{equation}
\begin{equation}
\mathrm{Within-unit\, balance:}\, \frac{1}{T} \sum_{t=1}^{T} \weight_{kt} = 0,\end{equation}
\begin{equation}
\mathrm{Within-period\, balance:}\,\sum_{k: \mathbf{W}_k \in \mathbf{W}^s}\pi_k\weight_{kt} = 0, \end{equation}
\begin{equation}
\mathrm{Non-negativity:}\,\weight_{kt} \mathbf{W}_{kt} \ge 0.
\end{equation}

The set $\mmw_\double$ treats units and periods asymmetrically. The weights need to balance arbitrary functions of $S_i$ within each period, but only need to balance unit fixed effects for every unit. Of course, this is a direct consequence of the two-way model that we consider for the outcomes. If the underlying model is more complicated -- e.g., there are interactive fixed effects -- then it will introduce additional restrictions. While we do not pursue such extensions in this paper, they can be addressed within our framework using the ideas from \cite{arellano2011identifying} and \cite{freyberger2018non}.

Combining earlier discussion of $\mmw_\outcome$ and $\mmw_\assignment$ it is easy to see that a necessary and sufficient condition for $\mmw_\double$ to be non-empty is that there exists a value $s$ for the sufficient statistic $S_i$ such that the corresponding $\mbw^s$ contains at least one of the following two sub-matrices (up to permutations):
\begin{equation}
\begin{aligned}
\mbw_1= \begin{pmatrix} 0 & 1\\ 0 & 0\end{pmatrix}
\ \ 
\mbw_2= \begin{pmatrix} 0 & 1\\ 1 & 0\end{pmatrix}
\\
\end{aligned}
\end{equation}
The requirement that for some $s$  the set  $\mbw^s$  contains at least one of these sub-matrices is in general more demanding than the assumption that $\mbw$ contains one of these submatrices. It is also more demanding than the overlap restriction in Assumption \ref{as:suf}.  At the same time, if $S_i$ includes $\overline W_i$ then for any $s$, $\mbw^s$ can contain $\mbw_1$ only if it contains $\mbw_2$ and this is equivalent to the overlap condition.

%{\bf this last sentence is not very clear}

Finally we can state the main identification result. The following theorem is a direct consequence of Propositions \ref{pr:id_model} and \ref{pr:id_design}:
\begin{theorem}\label{th:identification}
Suppose Assumptions \ref{as:static}, \ref{as:unc} hold, and either
\ref{as:chamberlain}, or
\ref{as:suf}, or both hold. Then for any $\weight\in\mmw_\double$, the estimand $\tau(\weight)$ is a convex combination of treatment effects. 
\end{theorem}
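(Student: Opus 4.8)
The plan is to treat this as a direct corollary of Propositions \ref{pr:id_model} and \ref{pr:id_design}, exactly as the text announces. The conceptual point that makes it a genuine double-robustness statement is that the estimand $\tau(\omega)=\mme\left[\frac{1}{T}\sum_{t=1}^{T}Y_{it}\,\omega_{k_{(i)}t}\right]$ is defined \emph{identically} in the outcome and design subsections: it is a fixed functional of the observed outcomes and the known weights $\omega$, and it makes no reference to which model is assumed. Therefore one single number $\tau(\omega)$ has to be shown to admit a convex-combination interpretation under \emph{whichever} of the two models happens to be correct. The engine driving this is the inclusion $\mmw_\double\subset(\mmw_\outcome\cap\mmw_\assignment)$ recorded in the text, which guarantees that any admissible $\omega$ is simultaneously a legal outcome-weight and a legal design-weight.

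First I would (re)confirm the two inclusions, since the proof rests entirely on them. For $\mmw_\double\subset\mmw_\outcome$: the normalization and the per-row constraint $\frac{1}{T}\sum_{t}\omega_{kt}=0$ are common to both definitions; the aggregate column constraint $\sum_{k}\pi_k\omega_{kt}=0$ of $\mmw_\outcome$ is obtained by summing the block constraint $\sum_{k:\mbw_k\in\mbw^s}\pi_k\omega_{kt}=0$ of $\mmw_\double$ over the values $s$, using that the blocks $\{\mbw^s\}_s$ partition the rows of $\mbw$; and the row-level sign constraint $\frac{1}{T}\sum_{t}\omega_{kt}\mbw_{kt}\ge 0$ follows from the per-cell constraint $\omega_{kt}\mbw_{kt}\ge 0$ by summing over $t$. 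For $\mmw_\double\subset\mmw_\assignment$: the block equality $\sum_{k:\mbw_k\in\mbw^s}\pi_k\omega_{kt}=0$ is shared verbatim, and the block sign constraint $\sum_{k:\mbw_k\in\mbw^s}\pi_k\omega_{kt}\mbw_{kt}\ge 0$ follows from $\omega_{kt}\mbw_{kt}\ge 0$ together with $\pi_k>0$, summing over the rows of $\mbw^s$.

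With both inclusions in hand I would split on the hypothesis. If Assumption \ref{as:chamberlain} holds, then $\omega\in\mmw_\outcome$, so Proposition \ref{pr:id_model} gives that $\tau(\omega)$ is a convex combination of the $\tau(U_i)$; since each $\tau(U_i)=\mme[\tau_{it}\mid U_i]$ is itself a convex combination of the individual effects $\tau_{it}$, and a convex combination of convex combinations is again convex, $\tau(\omega)$ is a convex combination of treatment effects. If Assumption \ref{as:suf} holds, then $\omega\in\mmw_\assignment$, and Proposition \ref{pr:id_design} delivers the same conclusion directly; if both hold, either route applies to the identical number $\tau(\omega)$. The main (and essentially the only) obstacle is the bookkeeping in the inclusion step: one must check that the single per-cell sign constraint $\omega_{kt}\mbw_{kt}\ge 0$ of $\mmw_\double$ is strong enough to imply \emph{both} the coarser row-level sign constraint of $\mmw_\outcome$ and the block-level sign constraint of $\mmw_\assignment$, and that summing the block equalities over the partition $\{s\}$ reproduces exactly the aggregate column constraint of $\mmw_\outcome$; once this is verified, the theorem reduces to citing the two propositions.
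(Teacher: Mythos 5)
Your proposal is correct and follows essentially the same route as the paper, which offers no separate proof but declares the theorem a direct consequence of Propositions \ref{pr:id_model} and \ref{pr:id_design} via the inclusion $\mmw_\double\subset(\mmw_\outcome\cap\mmw_\assignment)$ noted in the text; you simply make the inclusion bookkeeping explicit. The only point you gloss over is the (apparently typographical) discrepancy in the normalization constant between the definitions of $\mmw_\double$ and the other two sets, which affects nothing substantive.
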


It is important to compare this theorem with other doubly robust results in the panel literature (e.g., \cite{sant2020doubly}). The conventional interpretation of double robustness (e.g., \cite{robins1,chernozhukov2016double,chernozhukov2018double}) is based on two different ways of using covariates in estimation. One can either demean the outcomes, thus making units directly comparable, or re-weight the units to guarantee that the differences between them average out. To ensure good statistical properties (e.g., semiparametric efficiency), we can combine both of these ideas. In this case, the bias of the resulting estimator depends on the product of two errors. As a result, one can trade a less accurate outcome model for more precise weighting and vice versa. 

Our interpretation of double robustness is different, and in fact covariates do not play a role. We have not explicitly introduced the covariates, and thus any discussion on how to use them is complementary to our identification result. More precisely, our approach is based on combining two different identification arguments, whereas the traditional double robustness uses a single identification assumption (a version of conditional independence). In principle, one can combine traditional double robustness related to covariate adjustment with ours. %, but this lies outside of the scope of this paper.

\subsection{Algorithm}

Our estimator uses the panel data $\{Y_{it},W_{it},X_i\}_{i,t}$, where we now explicitly introduce time invariant covariates $X_i$. We assume that a researcher has constructed a sufficient statistic $S_i$. To incorporate covariates we consider two $p$-dimensional functions of $(X_{i},S_i,t)$ and $(X_i, S_i)$: $\psi^{(1)}(X_i, S_i,t) = (\psi^{(1)}_{1}(X_{i},S_i, t),\dots, \psi^{(1)}_{p}(X_{i},S_i, t))$,  $\psi^{(2)}(X_i,t) =(\psi^{(2)}_{1}(X_{i}, t),\dots, \psi^{(2)}_{p}(X_{i}, t))$, (where the second function does not depend on the sufficient statistic $S_i$) and define $\psi_{t}(X_i,S_i) \equiv (\psi^{(1)}(X_i,S_i,t),\psi^{(2)}(X_i,t))$. If $X_i$ is discrete these functions can be simply set to time-specific indicators for each value of $X_i$ and $S_i$.

Given these inputs, our estimator is defined in the following way:
\begin{equation}\label{eq:est_int}
\hat \tau \equiv\frac{1}{NT} \sum_{it}\hat \weight_{it} Y_{it}
\end{equation}
where the weights $\{\hat\weight_{it}\}_{it}$ solve the optimization problem:
\begin{equation}\label{primal_problem}
\begin{aligned}
\{\hat \weight_{it}\}_{it} = \argmin_{\{\weight_{it}\}_{it}} &\frac{1}{(NT)^2}\sum_{it} \weight_{it}^2 \\
\text{subject to: } &\frac{1}{nT}\sum_{it} \weight_{it} W_{it} = 1, & 
&\frac{1}{T}\sum_{i} \weight_{it}= 0,\\
&\frac{1}{N}\sum_{t} \weight_{it}= 0, &
&\frac{1}{NT}\sum_{it} \weight_{it} \psi_{t}(X_i,S_i) = 0,\\
& \weight_{it} W_{it} \ge 0,
\end{aligned}
\end{equation}
The weights $\hat \weight_{it}$ are related to  weights produced by OLS. The key difference is that we are explicitly looking for weights that balance out functions of $S_i$, not only fixed attributes $X_i$, and satisfy certain inequality constraints.

Our estimator fits naturally into recent theoretical literature on balancing weights (e.g., \cite{imai2014covariate,zubizarreta2015stable,athey2018approximate,hirshberg2017augmented,chernozhukov2018double,armstrong2018finite}). The main technical difference is that we need to balance unit-specific functions and explicitly impose non-negativity constraints. At the same time, we only balance a parametric class of functions of $(X_i, S_i)$, rather than a general nonparametric class (as in \cite{hirshberg2017augmented,armstrong2018finite})

The weights that we get from (\ref{primal_problem}) have the least possible norm, subject to balancing conditions motivated by Theorem \ref{th:identification}. Our choice of the objective function is justified by statistical reasons -- the variance of any linear estimator is directly related to the norm of the weights (even in heteroskedastic case). In practice, one can have additional concerns, such as interpretability of the resulting estimator. In this case, one can impose additional restrictions, for example by requiring the weights of the treated units to be uniform.

\section{Inference}\label{sec:est_inf}
\setcounter{equation}{0}

\subsection{Statistical framework}
We observe a random sample $\{(\uyy_{i}, \uww_i , X_i)\}_{i=1}^N$ where each $(\uyy_{i}, \uww_i , X_i)$ is distributed according to distribution $\mathcal{P}$.  For each unit we construct a sufficient statistics $S_i\equiv S( \uww_i , X_i)$, which includes $\overline W_{i}$.  In the analysis we focus on approximations with large $N$ and fixed $T$.

We maintain Assumptions \ref{as:static}, \ref{as:unc} and additionally restrict the outcome model to introduce covariates in a tractable way
\begin{assumption}\label{as:outcome_model_cov}
The following restriction holds for $t\in \{1,\dots, T\}$:
\begin{equation}
\begin{aligned}
    &Y_{it}(0) = \alpha_t(U_i) + \psi^{(2)}(X_i,t)^\top \delta^{(2)} + \nu_{it},\\
    &\mathbb{E}[\nu_{it}|U_i,X_i] = 0
\end{aligned}
\end{equation}
\end{assumption}
This assumption specifies a regression model that is separable in $X_i$ and $U_i$, and in addition imposes parametric restrictions on the second part. The second restriction can be relaxed in a standard way by the dimension of $\psi^{(2)}$ to grow with the sample size.

With our next assumption, we assume that  $\alpha_t(U_i)$ either satisfies the two-way model, or has a particular projection on  $(S_i,X_i)$.
\begin{assumption}\label{as:final_out_model}
If Assumption \ref{as:suf} is satisfied, then the following holds
\begin{equation}\label{eq:rest_suf}
\begin{aligned}
 &\alpha_t(U_i) = \beta_t +  \psi^{(1)}(X_i,S_i,t)^\top \delta^{(1)} + \upsilon_{it},\\
& \mathbb{E}[\upsilon_{it}|S_i,X_i] = 0.
\end{aligned}
\end{equation}
Otherwise, the two-way model holds: 
\begin{equation}\label{eq:rest_twfe}
    \alpha_t(U_i) =\beta_t + \alpha(U_i).
\end{equation}
\end{assumption}
We define the overall error
\begin{equation}
    \varepsilon_{it}\equiv \nu_{it} + \upsilon_{it},
\end{equation}
with the convention that $ \upsilon_{it} \equiv 0$ in case (\ref{eq:rest_twfe}) holds. By construction, under Assumptions \ref{as:unc} and \ref{as:final_out_model} this error has zero conditional mean:
\begin{equation}\label{eq:eps_prop}
    \mathbb{E}[\varepsilon_{it}|\uww_i,X_i] = 0.
\end{equation}

It is now easy to see that the estimator (\ref{eq:est_int}) has the following representation:
\begin{equation}
    \hat \tau = \frac{1}{NT}\sum_{it}\hat \gamma_{it} \tau_{it}W_{it} +  \frac{1}{NT}\sum_{it}\hat \gamma_{it} \varepsilon_{it}
\end{equation}
In our analysis we focus on the following (random) parameter:
\begin{equation}
\tau_{cond}=  \frac{1}{NT} \sum_{it} \hat\weight_{it}W_{it} \mathbb{E}[\tau_{it}|\uww_i, X_i]
\end{equation}
This is a conditional weighted average treatment effect, where the weights are directly observed and equal to $\hat \weight_{it}W_{it}$. By construction these weights are nonnegative and thus $\tau_{cond}$ is a convex combination of individual and time-specific treatment effects. 

Using this definition we expand $\hat \tau$ further:
\begin{equation}\label{eq:simp_decomp}
    \hat \tau - \tau_{cond} =  \frac{1}{NT}\sum_{it}\hat \gamma_{it} u_{it}W_{it} +  \frac{1}{NT}\sum_{it}\hat \gamma_{it} \varepsilon_{it},
\end{equation}
where $ u_{it}\equiv \tau_{it} - \mathbb{E}[\tau_{it}|\uww_i,X_i]$. This expansion together with (\ref{eq:eps_prop}) shows that the estimator is unbiased for $\tau_{cond}$. This is a direct consequence of Assumptions \ref{as:outcome_model_cov} and \ref{as:final_out_model}. In a less restrictive model (\ref{eq:simp_decomp}) would include a bias term which could then be bounded by invoking standard smoothness assumptions. In our analysis below we focus on the distributional properties of (\ref{eq:simp_decomp}) in large samples.

\subsection{Formal results}
To guarantee that the weights $\hat\gamma_{it}$ are well-behaved in large samples we need to impose restrictions on $(W_i,S_i,X_i)$. Define $\psi_{it} = (\mathbf{1}_{t =1},\dots, \mathbf{1}_{t =T}, \psi_t(X_i,S_i))$ -- $(T+2p)$-dimensional random vector. We consider its demeaned version:
\begin{equation}
\Gamma_{it}\equiv(1-W_{it})\psi_{it} - \frac{\sum_{l=1}^T(1-W_{il})\psi_{il}}{\sum_{l=1}^T(1-W_{il})},
\end{equation}
with the convention that $\Gamma_{it} \equiv 0$ if $W_{it} = 1$ for all $t$. For each period this vector measures the unit-level difference between $\psi_{it}$ and the average of $\psi_{it}$ restricted to periods where unit $i$ is treated. We impose the following assumptions on the joint distribution of $(W_i,S_i,X_i)$:
\begin{assumption}\label{as:tech_1}
(a) For arbitrary $\eta\in \mathbb{R}^{T+2p}$ and $\kappa >0$ the following holds:
\begin{equation}
\begin{aligned}
& \eta^\top\left(\frac{1}{T}\sum_{t=1}^T\mathbb{E}\left[ (1-W_{it})\Gamma_{it}\Gamma_{it}^{\top}\right]\right)\eta \ge \kappa \|\eta\|_2^2;
\end{aligned}
\end{equation}
(b) $\mathcal{P}$-a.s. $(X_i, S_i)$ belongs to a compact set, and $\psi(X_i, S_i, t)$ is a continuous function on this set.
\end{assumption}
The second part of this assumption is more restrictive then necessary but we expect it to hold in applications and it simplifies the formal analysis. Our next assumption restricts $S_i$ and imposes an overlap condition.
\begin{assumption}\label{as:tech_2}
$S_i$ includes $\overline W_i$ and $\mathbb{E}[W_{it}|S_i, X_i] <1-\eta$ for some $\eta>0$ $\mathcal{P}$ a.s.
\end{assumption}
In practice we expect $\overline W_i$ always be part of the sufficient statistic, while the second condition is a mild overlap restriction that we use to guarantee that the problem has a solution. Finally, to guarantee good behaviour of our estimator in the large samples we impose standard moment restrictions on $\epsilon_{it}\equiv \varepsilon_{it} + W_{it} u_{it}$.
\begin{assumption}\label{as:tech_3}
Errors $\varepsilon_{it}$ satisfy moment conditions:
\begin{equation}
\begin{aligned}
&0 <\underline {\sigma}^2_\epsilon\le \mathbb{E}[\epsilon^2_{it}|\uww_i, X_i] \le \overline {\sigma}^2_\epsilon < \infty,\\
&\mathbb{E}[\epsilon^4_{it}] < \infty.
\end{aligned}
\end{equation}
\end{assumption}

Next theorem shows that $\hat \tau$ exists with large probability and is close to $\tau_{cond}$:
\begin{theorem}\label{th:main_inf_theorem}
Suppose Assumptions \ref{as:static},\ref{as:unc}, \ref{as:outcome_model_cov},\ref{as:final_out_model},\ref{as:tech_1},\ref{as:tech_2},\ref{as:tech_3} hold. Then (a) with probability approaching one $\hat \tau$ exist; (b) there exist random variables $\left\{\weight^{\star}_t(X_i, \uww_i)\right\}_{t=1}^T$ such that the following holds:
\begin{equation}
\frac{1}{T}\sum_{t=1}^T\|\hat \weight_t - \weight^{\star}_t\|_2 = o_p(1);
\end{equation}
(c) the estimator is asymptotically normal
\begin{equation}
\sqrt{n}\left(\hat \tau - \tau_{cond}\right) \rightarrow \mathcal{N}(0, \sigma^2),
\end{equation}
where $\sigma^2=\mathbb{E}\left[ \left(\frac{1}{T}\sum_{t=1}^T\weight^{\star}_{it}\epsilon_{it}\right)^2\right]$.
\end{theorem}
This theorem describes the performance of our estimator in large samples. The population weights $\weight^{\star}$ depend on $(X_i, \uww_i)$, not only on $S_i$ which is an implication of the fact that we need to deal with individual fixed effects. 

To conduct inference we need to construct an estimator for $\sigma^2$. Our next results shows that conventional unit-level bootstrap provides a conservative estimator for the variance.
\begin{theorem}\label{th:bootstrap}
Let $\{\hat\tau_{(b)}\}_{b=1}^B$ be a set of non-parametric (unit-level) bootstrap analogs of $\hat \tau$. Define:
\begin{equation}
\hat \sigma^2 \equiv \frac{N}{B} \sum_{b=1}^B \left(\hat \tau_{(b)} -\hat \tau\right)^2
\end{equation}
and suppose that assumption of Theorem  \ref{th:main_inf_theorem} hold. Then if $ \mathbb{E}[\tau_{it}|\uww_i, X_i] = \tau$ $\hat \sigma^2$ is consistent for $\sigma^2$; otherwise $\hat \sigma^2$ is conservative. 
\end{theorem}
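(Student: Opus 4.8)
The plan is to derive an asymptotically linear representation for $\hat\tau$ around its \emph{population} target and then show that the unit-level nonparametric bootstrap reproduces the full variance of the corresponding influence function. The crux is that this influence-function variance equals $\sigma^2$ plus a nonnegative term coming from between-cell heterogeneity of $\mme[\tau_{it}\mid\uww_i,X_i]$, which the conditional estimand $\tau_{emp}$ of Theorem~\ref{th:main_inf_theorem} removes but the bootstrap cannot. This term vanishes exactly when the conditional treatment effect is constant, which yields the dichotomy (consistent vs.\ conservative).

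First I would build on the expansion already established for Theorem~\ref{th:main_inf_theorem}. Using the weight-consistency statement $\frac1T\sum_t\|\hat\omega_t-\omega^{\star}_t\|_2=o_p(1)$, I would write $\sqrt{N}(\hat\tau-\tau_{pop})=\frac{1}{\sqrt N}\sum_i\phi_i+o_p(1)$, where $\tau_{pop}$ is the population limit of $\tau_{emp}$ and $\phi_i$ is the unit-level influence function. The key structural observation is the split $\phi_i=\phi_i^{(1)}+\zeta_i$, where $\phi_i^{(1)}:=\frac1T\sum_t\omega^{\star}_{it}\bigl(u_{it}+W_{it}(\tau_{it}-\mme[\tau_{it}\mid\uww_i,X_i])\bigr)$ is exactly the term whose second moment is $\sigma^2$ and which satisfies $\mme[\phi_i^{(1)}\mid\uww_i,X_i]=0$, while $\zeta_i$ collects the remaining, $(\uww_i,X_i)$-measurable contribution arising from between-cell variation of $\mme[\tau_{it}\mid\uww_i,X_i]$ (net of the common additive constant removed by the normalization $\frac1{NT}\sum_{it}\hat\omega_{it}W_{it}=1$). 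Because $\phi_i^{(1)}$ is conditionally centered and $\zeta_i$ is conditionally deterministic, the law of total covariance gives $\mathrm{Cov}(\phi_i^{(1)},\zeta_i)=0$, hence
\[ \mathrm{Var}(\phi_i)=\sigma^2+\Delta,\qquad \Delta:=\mathrm{Var}(\zeta_i)\ge 0. \]

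Next I would show that the bootstrap consistently estimates $\mathrm{Var}(\phi_i)$. The statistic $\hat\tau$ is a smooth functional of the empirical distribution only after solving the constrained convex program~(\ref{primal_problem}); the substantive step is therefore to prove that re-solving this program on each bootstrap draw yields weights with the same population limit $\omega^{\star}$, uniformly, so that $\sqrt{N}(\hat\tau_{(b)}-\hat\tau)$ admits the analog $\frac1{\sqrt N}\sum_i (M^{(b)}_i-1)\phi_i+o_p(1)$ with $M^{(b)}_i$ the resampling multiplicities. Standard exchangeable-bootstrap theory for asymptotically linear estimators then gives $\frac{N}{B}\sum_b(\hat\tau_{(b)}-\hat\tau)^2\to_p\mathrm{Var}(\phi_i)=\sigma^2+\Delta$. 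I expect this to be the main obstacle: the program carries an inequality (nonnegativity) constraint and sample-dependent balancing constraints, so one must argue that the active set stabilizes --- that the binding constraints coincide with probability approaching one --- to rule out the directional nondifferentiability at which the bootstrap can fail. The dual representation of Section~\ref{sec:dual}, which recasts the solution as a smooth $M$-estimator minimizing the population expectation of the loss $\rho_{W_{it}}$ in $(\lambda_{(t)},\lambda_{(i)},\gamma)$, is the natural device for making this rigorous and for transferring the convergence to bootstrap samples.

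Finally I would read off the two regimes. Since $\hat\sigma^2\to_p\sigma^2+\Delta$ with $\Delta\ge 0$, the estimator is always conservative. For consistency it remains to show $\Delta=0$ under $\mme[\tau_{it}\mid\uww_i,X_i]=\tau$. The argument is an invariance one: because the normalization $\frac1{NT}\sum_{it}\hat\omega_{it}W_{it}=1$ holds exactly in every sample (and in the limit $\mme[\frac1T\sum_t\omega^{\star}_{it}W_{it}]=1$), both $\hat\tau$ and $\tau_{pop}$ shift by the same constant $c$ when every $\tau_{it}$ is shifted by $c$; hence $\hat\tau_{(b)}-\hat\tau$, and therefore $\hat\sigma^2$, depends on the treatment effects only through their deviations from a common constant. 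Decomposing a centered effect as $(\tau_{it}-\mme[\tau_{it}\mid\uww_i,X_i])+(\mme[\tau_{it}\mid\uww_i,X_i]-\tau)$, the first (within-cell) piece produces precisely the heterogeneity term already inside $\sigma^2$, while the second (between-cell) piece produces $\Delta$. When the conditional mean is constant the between-cell piece is identically zero, so $\zeta_i\equiv 0$, giving $\Delta=0$ and exact consistency; otherwise $\zeta_i$ is genuinely random, $\Delta>0$, and $\hat\sigma^2$ overstates $\sigma^2$, which is the claimed conservativeness.
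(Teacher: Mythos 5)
Your proposal is correct and follows essentially the same route as the paper: you first establish that the bootstrap weights converge to the same population limit $\omega^{\star}$, then decompose the bootstrap fluctuation into the conditionally centered piece $\frac1T\sum_t\omega^{\star}_{it}\bigl(u_{it}+W_{it}(\tau_{it}-\mme[\tau_{it}\mid\uww_i,X_i])\bigr)$ whose variance is $\sigma^2$ and an uncorrelated, $(\uww_i,X_i)$-measurable piece driven by $\mme[\tau_{it}\mid\uww_i,X_i]$, which is nonnegative in variance and is killed by the normalization $\frac{1}{NT}\sum_{it}\hat\omega_{it}W_{it}=1$ exactly when the conditional effect is constant. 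The only difference is presentational (influence-function/exchangeable-bootstrap language versus the paper's direct computation with the multiplicities $M_i$), and your explicit flagging of the active-set stabilization issue is if anything more careful than the paper's treatment.
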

Theorems \ref{th:main_inf_theorem} and \ref{th:bootstrap} imply that one can construct asymptotically conservative confidence intervals by standard methods. In particular, let $z_{\alpha}$ be an $\alpha$-level quantile of the standard normal distribution. Then the following interval has an asymptotic coverage of at least $1-\alpha$:
\begin{equation}
    \tau_{cond} \in \hat \tau \pm \sqrt{\frac{\hat \sigma^2}{N}}z_{\alpha/2}
\end{equation}

\section{Non-binary Treatments}\label{sec:extensions}
In applications, the treatment $W_{it}$ is often non-binary, and the results discussed so far are not directly applicable. One possibility is to binarize the treatment, but this process will change both the outcome and the assignment models. This section discusses an alternative strategy for dealing with a general treatment. 

To proceed we need to specify the outcome model and the assignment process for general non-binary treatment $W_{it}$. For the outcome model we resort to the two-way linear structure:
\begin{equation}
\begin{aligned}
&Y_{it}(w) = \alpha(U_i) + \lambda_t + \tau_{t}(U_i) w +\epsilon_{it}\\
&\mathbb{E}[\epsilon_{it}|U_i] = 0\\
\end{aligned}
\end{equation}
thus abstracting away from  potential non-linear effects of $w$. This is the standard assumption made in applications, and it does not take us far from the current empirical practice.

For the assignment model we consider a baseline distribution $f_0(w)$ that has the same support as $W_{it}$. If $W_{it}$ is non-negative, then this can be an exponential distribution, if $W_{it}$ represents counts of certain events, then $f_0(w)$ can be Poisson. We then assume that the distribution of $W_{i}$ conditional on $U_i$ belongs to the following exponential family:
\begin{equation}\label{eq:glm_basic}
    f(W_i|U_i) = \exp\left\{\sum_{t}\beta^\top(U_i) \psi_t(W_{it}) -\psi(U_i)\right\}\prod_{t}f_0(W_{it})
\end{equation}
where $\psi_t(\cdot)$ is a known function. This structure directly generalizes the example presented in Section \ref{sec:ag_shocks}. In particular, if we observe aggregate shocks $Z_{t}$ then it is natural to consider $\psi_t(W_{it}) = Z_{t}W_{it}$.

Exponential structure of the assignment model implies the general unconfoundedness condition:
\begin{equation}
    W_{i}\indep \{Y_i(w)\}_{w}| S_i
\end{equation}
where $S_i = \sum_t\psi_t(W_{it})$. Given $S_i$ we can proceed by identifying the effect by running the standard two-way regression:
\begin{equation}
    Y_{it} = \alpha_i + \lambda_t + \tau_{it}W_{it} + \epsilon_{it}
\end{equation}
withing the subpopulations defined by $S_i$. This approach delivers meaningful causal effects if $\tau_{it}$ does not vary in the subpopulations defined by $S_i$. In practice, we can split the data into clusters with similar values of $S_i$, run OLS separately for each cluster, and then aggregate the effects. This approach connects us with the recent work on fixed-effect models  (\cite{bonhomme2015grouped,bonhomme2017discretizing}), where the authors argue for using K-means algorithm to classify units into clusters, as a way to estimate computationally challenging fixed-effect models. Our results shows how to use the assignment model to derive the characteristics that can be used for such classification. 

The model (\ref{eq:glm_basic}) can be considerably generalized. Instead of the aggregate shocks, one can consider stationary dynamic models from Section \ref{sec:dyn_model}. In general, one can use a rich class of generalized linear models (e.g., \cite{nelder1972generalized,efron2016computer}) to adapt the assignment process to the particular structure of $W_{it}$. These models are commonly used in applied data analysis to understand complex data structures, and our results show how to exploit them for identification purposes. 

If we use the normal distribution as the baseline for $W_{it}$ then the assignment model reduces to linear regression:
\begin{equation*}
    W_{it} = \beta(U_i)^\top \psi_t + \nu_{it}
\end{equation*}
implying that $W_{it}$ can be decomposed into a low-rank component $\beta(U_i)^\top \psi_t$ and idiosyncratic noise $\nu_{it}$.  This suggests that one can use interactive fixed effects regressions to estimate the treatment effects. This choice is attractive for some applications, but the panels that we have in mind have small $T$ and large $n$, rendering both conventional interactive fixed effects regressions (e.g., \cite{bai2009panel}) and its regularized analogs (e.g., \cite{chernozhukov2019inference}) inconsistent. Also, we do not restrict the persistence in the errors of the potential outcomes; thus, the GMM estimators in the spirit of \cite{holtz1988estimating}, or \cite{freyberger2018non} are inapplicable as well.

\section{Empirical Illustration and Simulations}\label{sec:emp_il}
\setcounter{equation}{0}

\subsection{Empirical illustration}
To illustrate our approach at work in a real application we consider data from \cite{charles2013employment}. In the paper authors analyze the relationship between the local voting preferences (expressed by turnout) and local economic outcomes (such as earnings or employment). In particular, the stylized version of the main regressions that is proposed in the paper has the following form:
\begin{equation}\label{eq:basic_reg}
    Y_{it} = \alpha_i + \lambda_t + \tau W_{it} + \epsilon_{it}
\end{equation}
where the unit of observation $i$ corresponds to the U.S. counties, $Y_{it}$ measures the local turnout (we will focus on the presidential elections), $W_{it}$ measure the log-income per capita in the corresponding county.  Authors estimate $\tau$ by IV, using aggregate shocks to construct instruments. In particular, their first stage model has the following form:
\begin{equation}\label{eq:basic_fs}
\Delta W_{it} =  \theta_t + \gamma^\top_1D_{1i}\Delta Z_{1t} +  \gamma^\top_2D_{2i} \Delta Z_{2t} + \nu_{it}
\end{equation}
where $\{Z_{1t},Z_{2t}\}_t$ correspond to nation-level oil and coal prices, and $D_{ki} = (D_{1ki},D_{2ki})$ are indicators for the importance of oil and coal for the county $i$ (medium or large). As a result, authors use the variation in $\Delta W_{it}$ that is ``cleaned'' from $\nu_{it}$ and $\theta_t$ to identify $\tau$. This variation is coming from two sources: the variation in $D_{ki}$ over $i$ and $\Delta Z_t$ over $t$. The underlying identification assumption behind this approach is that the endogeneity problem arises from $\nu_{it}$ being correlated with $\Delta \epsilon_{it}$, while $D_{ki}$ is not.

Our approach to identification is different: instead of (\ref{eq:basic_fs}) we consider the following first stage model:
\begin{equation}\label{eq:our_fs}
    W_{it} = \beta_i + \theta_t + \gamma_{1i}Z_{1t} + \gamma_{2i}Z_{2t} + \upsilon_{it}
\end{equation}
and assume that $(\beta_i, \gamma_{1i},\gamma_{2i})$ are correlated with the potential outcomes, while $\{\upsilon_{it}\}_{t}$ are not. Using our previous notation, we can express this in the following way:
\begin{equation}
    U_i =(\beta_i, \gamma_{1i},\gamma_{2i})
\end{equation} 
As a result, our approach is complimentary to that of \cite{charles2013employment}. While they exploit the variation in $D_{ki}$ -- which can be viewed as a proxy for $\gamma_{ki}$ -- we instead control for it and exploit the variation in $\upsilon_{it}$.\footnote{In principle one can utilize variation in $Z_{kt}$ only, but more time periods are needed for this approach to be practically useful. See \cite{arkhangelsky2020policy} for more details.}

We use (\ref{eq:our_fs}) to construct the sufficient statistic for $U_i$:
\begin{equation}
S_i\equiv \left(\sum_{t\le T} W_{it},\sum_{t\le T} Z_{1t}W_{it}, \sum_{t\le T} Z_{1t}W_{it}    \right)
\end{equation}
Note that if $\{\nu_{it}\}_t$ has normal distribution then $S_i$ is sufficient for $U_i$, otherwise one can justify using $S_i$ with the logic from  Section \ref{sec:ag_shocks}. $S_i$ is a 3-dimensional object and to control for it we split them into $K=1000$ groups with similar values of $S_i$. Once these groups are defined, we proceed by estimating \ref{eq:basic_reg} by OLS with two-way fixed effects within each group. The results of the estimation are presented in Table \ref{table:emp_res}. These results are qualitatively similar to those obtained by  \cite{charles2013employment} who also do not find significant effects for the presidential elections. 

\begin{table}[t]
\begin{center}
\begin{tabular}{l|r|r}
  \hline
 & estimate & s.e. \\ 
  \hline
$\hat \tau_{DR}$ & 0.003 & 0.007 \\ 
   \hline
\end{tabular}
\caption{The results are based on the data from $n = 2994$ counties over $T = 8$ presidential elections (1972-2000). The outcome is the turnout in the presidential elections at the county level, and the treatment is the $\log$-earnings. Sufficient statistic $S_i$ is constructed using $\log$(national employment) in coal and gas industries. $K$-means algorithm is used to split counties into $K = 1000$ groups based on $S_i$. Standard errors are computed using county-level bootstrap.
\label{table:emp_res}}
\end{center}
\end{table}

\subsection{Simulations}
We use the data from  \cite{charles2013employment} as a basis for a simulation. Let $\mathbf{Y}$ and $\mathbf{W}$ be $n\times T$ matrices with enties equal to $Y_{it}$ and $W_{it}$, respectively. We standardize each of these matrices by subtracting the overall mean, and dividing by the overall standard deviation. We then decompose them into three components:
\begin{equation}
    \begin{aligned}
    &\mathbf{Y} = \mathbf{F}_{Y} + \mathbf{L}_{Y} + \mathbf{E}_{Y}\\
     &\mathbf{W} = \mathbf{F}_{W} + \mathbf{L}_{W} + \mathbf{E}_{w}
    \end{aligned}
\end{equation}
where $F_{k} = \alpha_i^{(k)} + \lambda_t^{(k)}$ is the two-way matrix, $L_{k}$ is a matrix of rank $5$, and $\mathbf{E}_{k}$ captures the residual variation. We compute the size and the correlation between the residuals and elements of matrices $L_k$:
\begin{equation}
    \begin{aligned}
    &\sigma^2_{k}(E) = \frac{\sum_{it}E^2_{k,it}}{nT}\\
    &\rho(E) = \frac{\sum_{it}E_{y,it}E_{w,it}}{nT\sigma_w(E) \sigma_y(E)}\\
    &\sigma^2_{k}(L) = \frac{\sum_{it}L^2_{k,it}}{nT}\\
    &\rho(L) = \frac{\sum_{it}L_{y,it}L_{w,it}}{nT\sigma_w(L) \sigma_y(L)}
    \end{aligned}
\end{equation}
and then simulate the data using the following model:
\begin{equation*}
    \begin{aligned}
    &Y_{it} = F^{(b)}_{Y,it} + \frac{1}{c(\zeta)}\left(\sqrt{(1-\zeta^2)} L^{(b)}_{Y,it} + \zeta\frac{\sigma_{Y}(L)}{\sigma_{W}(L)}L^{(b)}_{W,it}\right) + \epsilon_{it}\\
    &W_{it} = F^{(b)}_{W,it} + L^{(b)}_{W,it} + \upsilon_{it}\\
    \end{aligned}
\end{equation*}
where $(F_{Y,it}^{(b)}, L_{Y,it}^{(b)},F_{W,it}^{(b)}, L_{W,it}^{(b)})$ are sampled uniformly from the rows of $F_k,L_k$, while $(\epsilon_{it},\nu_{it})$ have a joint normal distribution with the covariance matrix implied by $\sigma_y,\sigma_w,\rho$. Parameter $\zeta$ controls the excess selection bias that is not present in the real data. Parameter $c(\zeta)$ normalizes this component to have the expected sum of squares equal to $\sigma^2_Y(L)$ to keep the relative sizes of the fixed effects and the low-rank component constant. We consider two designs: in the first one $\zeta = 0$, in the second it is equal to $0.05$. Note that in this simulation the effect of the treatment is equal to zero, which is natural given the results presented in the previous section. 

The summary of the results over 1000 simulations is presented in Table \ref{table:sim_res}. We use two benchmarks: the standard two-way OLS regression and the TSLS regression implemented in the original paper. In the baseline case, our estimator and the standard TW perform equally well. Both exhibit certain bias, which is not surprising given the presence of the low-rank components and the correlation between the errors. Once we introduce the additional selection bias $(\zeta = 0.05)$, the performance of TW estimator deteriorates considerably (1300\% increase in RMSE), while our estimator continues to perform well (50\% increase in RMSE).

We emphasize that we do not generate the treatment using the first stage described in the previous section:
\begin{equation*}
    W_{it} = \theta_t + \gamma_{1i}Z_{1t} + \gamma_{2i}Z_{2t} + \upsilon_{it}
\end{equation*}
Instead, we use the actual data to extract the systematic components of $W_{it}$. Moreover, we do not set the correlation between $\upsilon_{it}$ and $\epsilon_{it}$ to zero. There are two reasons for the success of our estimator. First, the contribution of the errors $\upsilon_{it},\epsilon_{it}$ to the corresponding outcomes is small compared to the contribution from $L_{k,it}$ and $F_{k,it}$. In particular, in both cases, the two-way fixed effects play a key role, explaining $85\%$ of the variation in the data. This drives the good behavior of both TW and DR when $\zeta = 0$. Once we scale the correlation between $L_{W, it}$ and $L_{Y,it}$ the low-rank component starts to play a role, and TW estimator does not do anything about it. In contrast, the aggregate shocks $(Z_{1t},Z_{2t})$ allow us to extract important components of the low-rank matrix and control for them.

%\begin{sidewaystable}[p]
\begin{table}[t]
%\vskip0cm
\begin{center}
\begin{tabular}{|l|rr|rrr|rrr|}
\hline
&  \multirow{2}{*}{$\rho(L)$} &   \multirow{2}{*}{$\rho(E)$} & \multicolumn{3}{c|}{RMSE}& \multicolumn{3}{c|}{Bias}\\
& && DR & TW & TSLS & DR & TW &TSLS\\
\hline 
 \hline
 Design 1 ($\zeta = 1$) & 0.039& -0.038 &0.016 &   0.016 &    0.521 &  -0.01  &  0.013 &   -0.005\\
  Design 2 ($\zeta = 0.95$) & 0.351 & -0.038 & 0.023   & 0.205 &    0.505 &   0.02  &  0.205  &   0.222\\
  \hline
\end{tabular}
\caption{Results are based on $1000$ simulations, with  $n = 2994$ and $T = 8$; size of the outcome and assignment models components: $(\sigma^2_Y(L),\sigma^2_W(L),\sigma^2_Y(E),\sigma^2_W(E)) =  (0.12,  0.09,0.02,0.06)$.
  \label{table:sim_res}}
 \end{center}
  \end{table}						       
%\end{sidewaystable}

\section{Conclusion}\label{sec:conc}
\setcounter{equation}{0}

This paper proposes a novel identification argument that can be used to evaluate a causal effect using panel data. We show that one can naturally combine familiar restrictions on the relationship between the outcome and the unobserved unit-level characteristics with reasonable economic models of the assignment. Our approach allows us to construct a doubly robust identification argument: our estimand has causal interpretation if either the outcome model is correct or the assignment model is correct (or both). Using these results, we construct a natural generalization of the standard two-way fixed effects estimator robust to arbitrary heterogeneity in treatment effects. We prove that our estimator is asymptotically normal in large samples and show how to use it to conduct asymptotically valid inference.

\bibliographystyle{chicago}
\bibliography{references}

\begin{appendix}
\section{Appendix}
\setcounter{equation}{0}

\subsection{Propositions}
\textbf{Proof of Proposition \ref{pr:id_model}}:
For any $\weight \in \mmw_\outcome$ we defined the random variables
\begin{equation}
\begin{aligned}
&\weight_{it} \equiv \sum_{k=1}^{K} \weight_{kt} \{\uww_{i} = \mbw_{k}\}\\
\end{aligned}
\end{equation}
and considered the following estimator:
\begin{equation}
\tau(\weight) = \mme\left[\frac{1}{T}\sum_{t=1}^TY_{it} \weight_{it} \right]
\end{equation}
By assumption we have the representation:
\begin{multline}
 \mme\left[\frac{1}{T}\sum_{t=1}^TY_{it} \weight_{it} \right] =  \mme\left[\frac{1}{T}\sum_{t=1}^T(\alpha(U_i) + \lambda_t + \tau_t(U_i)W_{it} + \varepsilon_{it}) \weight_{it} \right] = \\
  \mme\left[\frac{1}{T}\sum_{t=1}^T(\alpha(U_i) + \lambda_t + \tau(U_i)W_{it} + \varepsilon_{it})  \sum_{k=1}^{K} \weight_{kt} \{\uww_{i} = \mbw_{k}\} \right] \\
  =  \mme\left[\frac{1}{T}\sum_{t=1}^T(\alpha(U_i)\weight_{kt} \{\uww_{i} = \mbw_{k}\}) \right] +\\
   \frac{1}{T}\sum_{t=1}^T \lambda_t\sum_{k=1}^K\mme\left[\weight_{kt} \{\uww_{i} = \mbw_{k}\} \right] +   \mme\left[\frac{1}{T} \sum_{k=1}^{K}\sum_{t=1}^T \tau_{t}(U_i)\{\uww_{i} = \mbw_{k}\}\mbw_{kt} \weight_{kt}  \right] = \\
 \frac{1}{T}\sum_{t=1}^T\lambda_t \sum_{k=1}^{K} \pi_k \weight_{kt} +  \sum_{t=1}^T\mme\left[\tau_t(U_i)\xi_t(\uww_i)  \right] =  \sum_{t=1}^T \mme\left[\tau_t(U_i)\xi_t(\uww_i)  \right]
\end{multline}
where $\xi_t(\uww_i)\equiv \sum_{k=1}^{K}\{\uww_{i} = \mbw_{k}\} \mbw_{kt} \weight_{kt}\ge0$. The first equality follows from the restrictions on the outcome model, the second -- by definition of the weights, the third -- because $\mathbb{E}[\varepsilon_i|U_i] = 0$ and strict exogeneity assumption; finally the last two equalities follow by construction of weights. By construction we also have that $\xi_t(\uww_i)\ge0$ and $\mathbb{E}[\sum_{t=1}^T\xi_t(\uww_i)] = 1$.\\

\textbf{Proof of Proposition \ref{pr:op_unc}}: For arbitrary $\underline w$ and measurable $A_0, A_1$ we need to demonstrate conditional independence::
\begin{multline}
\mathbb{E}[\{\uww_i = \underline w\}\{\underline Y_{i}(0) \in A_0, \underline Y_{i}(1) \in A_1\}|S_i] =\\ \mathbb{E}[\{\uww_i = \underline w\}|S_i]\mathbb{E}[\{\underline Y_{i}(0) \in A_0, \underline Y_{i}(1) \in A_1\}|S_i]
\end{multline}
Consider a chain of equalities:
\begin{multline}
\mathbb{E}[\{\uww_i = \underline w\}\{\underline Y_{i}(0) \in A_0, \underline Y_{i}(1) \in A_1\}|S_i] = \\
\mathbb{E}[\mathbb{E}[\{\uww_i = \underline w\}\{\underline Y_{i}(0) \in A_0, \underline Y_{i}(1) \in A_1\}|S_i,U_i]|S_i] = \\
\mathbb{E}[\mathbb{E}[\{\uww_i = \underline w\}|S_i,U_i]\mathbb{E}[\{\underline Y_{i}(0) \in A_0, \underline Y_{i}(1) \in A_1\}|S_i,U_i]|S_i] =\\
\mathbb{E}[\mathbb{E}[\{\uww_i = \underline w\}|S_i]\mathbb{E}[\{\underline Y_{i}(0) \in A_0, \underline Y_{i}(1) \in A_1\}|S_i,U_i]|S_i] =\\
 \mathbb{E}\{\uww_i = \underline w\}|S_i]\mathbb{E}[\{\underline Y_{i}(0) \in A_0, \underline Y_{i}(1) \in A_1\}|S_i]
\end{multline}
where the second equality follows by strict exogeneity and the fact that $S_i$ is a measurable functionon $\uww_i$, and the third one follows from sufficiency.\\

\textbf{Proof of Proposition \ref{pr:id_design}}:
The proof is similar to Proposition \ref{pr:id_model} and is omitted. 

\subsection{Dual representation}\label{sec:dual}
The Lagrangian saddle-point problem for the program (\ref{primal_problem}) has the following form:
\begin{multline}
\inf_{\weight_{it}}\sup_{\lambda_{(t)}, \lambda_{(i)}, \eta, \mu_{it}\ge 0, \pi\ge 0}\frac{1}{(NT)^2}\sum_{it} \weight_{it}^2  + \frac{1}{N} \sum_{i} \lambda_{(i)}\left(\frac{1}{T}\sum_{i} \weight_{it}\right) +\\
\frac{1}{T} \sum_{t}\lambda_{(t)} \left(\frac{1}{N}\sum_{t} \weight_{it}\right) +\pi\left(1-\frac{1}{NT}\sum_{it} \weight_{it} W_{it} \right) -\\
 \eta^\top\left(\frac{1}{NT}\sum_{it} \weight_{it} \tilde\psi_{it}\right) - \frac{1}{NT}\sum_{it}\mu_{it}\weight_{it}W_{it}
\end{multline}
where we use $\tilde\psi_{it}$ as a shorthand for $\psi(X_i, S_i, t)$. In Lemma \ref{lemma:ex_sol} we show that strong duality holds and we can rearrange the minimization and maximization: 
\begin{multline}
\sup_{\lambda_{(t)}, \lambda_{(i)}, \eta,\mu_{it}\ge 0, \pi\ge 0} \inf_{\weight_{it}}\frac{1}{(NT)^2}\sum_{it} \weight_{it}^2 + \frac{1}{N} \sum_{i} \lambda_{(i)}\left(\frac{1}{T}\sum_{i} \weight_{it}\right) + \\
\frac{1}{T} \sum_{t} \lambda_{(t)}\left(\frac{1}{N}\sum_{t} \weight_{it}\right) - \pi\left(\frac{1}{NT}\sum_{it} \weight_{it} W_{it} -1\right) - \\
\eta^\top\left(\frac{1}{NT}\sum_{it} \weight_{it} \tilde\psi_{it}\right) - \frac{1}{NT}\sum_{it}(\mu_{it}\weight_{it}W_{it})
\end{multline}
Solving this in terms of $\weight_{it}$ (an unconstrained quadratic problem) we get the following representation:
\begin{equation}
\inf_{\lambda_{(t)}, \lambda_{(i)}, \eta,\mu_{it}\ge 0, \pi\ge 0}\mathbb{P}_n\left[\frac{1}{T}\sum_{t=1}^T\left( \pi W_{it} -\lambda_{(t)}-\lambda_{(i)} -\eta^\top \tilde\psi_{it} - \mu_{it}W_{it}\right)^2\right] 
- \frac{4\pi}{N}
\end{equation}
We can further simplify this expression by concentrating out $\mu_{it}$ and $\pi$. To this end, define the following loss function:
\begin{equation}
\rho_z(x)\equiv x^2(1-z) + x_{+}^2z
\end{equation}
After some algebra we get the following:
\begin{equation}
\inf_{\lambda_{(t)}, \lambda_{(i)}, \eta}  \mathbb{P}_n\left(\frac{1}{T}\sum_{t=1}^T\rho_{W_{it}}\left(W_{it} - \lambda_{(t)}-\lambda_{(i)}- \eta^\top \tilde\psi_{it}\right)\right)
\end{equation}
Let $\{\hat\lambda_{(t)}, \hat \lambda_{(i)},\hat\eta\}_{i,t}$ be the solutions to this problem. The optimal unnormalized weights are equal to the following:
\begin{equation}
\hat \weight_{it}^{(un)} =\left(W_{it} - \hat\lambda_{(t)}-\hat\lambda_{(i)}- \hat\eta^\top \tilde\psi_{it}\right)(1-W_{it}) + 
\left(W_{it} - \hat\lambda_{(t)}-\hat\lambda_{(i)}- \hat\eta^\top \tilde\psi_{it}\right)_{+}W_{it}
\end{equation}
and the optimal weights are given by the normalization:
\begin{equation}
\hat \weight_{it}\equiv \frac{\hat \weight_{it}^{(un)}}{\frac{1}{NT}\sum_{it} \hat \weight_{it}^{(un)} W_{it}}
\end{equation}
By construction the weights are non-negative for the treated units and sum up to one once multiplied by $W_{it}$. The denominator is strictly positive under the conditions of Lemma \ref{lemma:ex_sol}.

\subsection{Lemmas}

In the following analysis we use $\psi_{it}$ (or $\psi_t$ for a generic value) as a shorthand for $T+2p$-dimensional vector that includes time fixed effects and all relevant covariates. For arbitrary $\eta$ and $\uww$ such that $\overline{W} <1$ define a function $g(X, \uww, \eta)$:
\begin{equation}\label{eq:g_def}
g(X, \uww, \eta)= \argmin_{\alpha}\left\{\frac{1}{T} \sum_{t=1}^T \rho_{W_t}(W_{t} - \alpha - \psi^{\top}_{t}\eta)\right\}
\end{equation}
The following lemma shows that $g$ is well-defined, and demonstrates that it is well-behaved in $\eta$ as long as $\psi_t$ is bounded.
\begin{lemma}\label{lemma:prop_ex}
Function $g(X, \uww, \eta)$ is uniquely defined, if $\|\psi_t\|_{\infty}<K$ then $g(X, \uww, \eta)$ is $\mathcal{P}$ a.s. uniformly (in $(X, \uww)$) Lipschitz in $\eta$. 
\end{lemma}
\begin{proof}
If $\overline W <1$ then the minimized function is strictly convex with a unique minimum. Define $h_t\equiv W_t-\psi_t^{\top}\eta$; and let $\tilde h_{(1)}, \dots, \tilde h_{(\sum_{t=1}^TW_t)}$ be the decreasing ordering of $h_t$ for units with $W_{t} = 1$; let $\tilde h_{(0)} = 0$. For $k = 0, \dots, \sum_{t=1}^TW_t$ define the following functions:
\begin{equation}
g_{k}(X, \uww, \eta) \equiv  \frac{\sum_{t=1}^T(1-W_{it})h_t + \sum_{l=0}^k \tilde h_{(l)}}{\sum_{t=1}^T(1-W_{it}) +k}\\
\end{equation}
It is easy to see that we have the following:
\begin{equation}
g(X, \uww, \eta) = g_{0}(X, \uww, \eta) + \sum_{l=1}^{\sum_{t=1}^TW_t}  \{\tilde h_{(l)}\ge g_{(l-1)}\}(g_{l}(X, \uww, \eta) - (g_{l-1}(X, \uww, \eta) )
\end{equation}
From this representation if follows that $g(X, \uww, \eta)$ is locally linear in $\eta$, and thus if $\psi_t$ is $\mathcal{P}$-a.s. bounded, then it is uniformly (in $(X, \uww)$) Lipschitz in $\eta$. 
\end{proof}

For future use we define $g(X, \uww, \eta) \equiv \infty$ for $\uww$ such that $\overline W = 1$. It is straightforward to see that with this definition (\ref{eq:g_def}) holds for all $\uww$ (of course, minimum is not unique if $\overline W = 1$).

\begin{lemma}\label{lemma:exist}
Suppose Assumptions \ref{as:tech_1} are satisfied. Then there exist a vector $\eta^{\star}\in \mathbb{R}^{T+2p}$ such that the following conditions are satisfied:
\begin{equation}
\begin{aligned}
&\xi_{it} \equiv W_{it} - g(X, \uww, \eta^{\star}) - \psi_{it}^\top \eta^{\star},\\
&\mathbb{E}\left[ \sum_{t=1}^T\xi_{it}\psi_{it}(1-W_{it}\{W_{it} \le g(X, \uww, \eta^{\star}) +\psi_{it}^{\top}\eta^{\star}\})\right] = 0,\\
&\sum_{t=1}^T \xi_{it}(1-W_{it}\{W_{it} \le g(X, \uww, \eta^{\star}) +\psi_{it}^{\top}\eta^{\star}\})= 0.
\end{aligned}
\end{equation}
\end{lemma}
\begin{proof}
Define $\mathcal{F}\equiv \{f = (f_{1},\dots f_{T}): f_{t} = h_0 +\psi_{t}^\top \eta, h_0 \in L_{2}(\mathcal{P}_{mar}), \eta \in \mathbb{R}^{2p}\}$. For any $f\in \mathcal{F}$ define the risk function:
\begin{equation}
R(f) \equiv \mathbb{E}\left[\frac{1}{T} \sum_{t=1}^T \rho_{W_{it}}(W_{it} -f_t(\uww_i, X_i))\right]
\end{equation}
By construction it satisfies the following bound:
\begin{equation}
    R(f) \ge \eta^\top\mathbb{E}\left[\frac{1}{T} \sum_{t=1}^T(1-W_{it})\Gamma_{it}\Gamma_{it}^\top \right]\eta \ge \kappa \|\eta\|_2^2
\end{equation}
which implies that we can focus our attention to $\|\eta\| \le C$, where $C$ is some constant. 

Using Lemma \ref{lemma:prop_ex} we restrict attention to $f \in \tilde{\mathcal{F}} = \{f \in \mathcal{F}:f_t = g(X_i, \uww_i,\eta) + \psi_{it}^\top\eta\}$. By construction for any $f \in  \tilde{\mathcal{F}}$ we have the following
\begin{multline}
    \frac{1}{T} \sum_{t=1}^T \rho_{W_{it}}(W_{it} -f_t(\uww_i, X_i)) \le   \frac{1}{T} \sum_{t=1}^T (W_{it} -\psi_{it}^\top \eta)^2 \le \\
    2 +  2\eta^\top\left(\frac{1}{T}\sum_{t=1}^T\psi_{it}\psi_{it}^\top\right)\eta \le 2 + \frac{2}{T}\sum_{t=1}^T\|\psi_{it}\|_2^2C^2
\end{multline}
which allows us to use dominated convergence theorem and conclude that $R(f)$ is continuous as a function of $\eta$. Invoking Weierstrass theorem we thus conclude that there exists $\eta^{\star}$ such that the following holds:
\begin{equation}
    f(\eta^{\star}) = \argmin_{f\in \tilde{\mathcal{F}}}R(f)
\end{equation}
where $f(\eta^{\star}) \equiv g(X_i, \uww_i,\eta^{\star}) + \psi_{it}^\top\eta^{\star}$. Moment conditions then follow from the first order conditions. 
\end{proof}

For future use define $(\gamma^{\star})^{un} \equiv \xi_{it}(1-W_{it}) + (\xi_{it})_{+}W_{it}$ -- unnormalized oracle weights.

\begin{lemma}\label{lemma:emp_proc}
Suppose Assumption  \ref{as:tech_1} is satisfied. Then $\|(\weight^{\star})^{un}- \hat \weight^{un}\|_2= o_p(1)$.
\end{lemma}
\begin{proof}
We use the dual representation derived in Section \ref{sec:dual} and show that the solution converges to a population one. We use a natural adaptation of the ``small-ball'' argument from \cite{mendelson2014learning}. This is not necessary and one can construct a simpler proof using standard results for GMM estimators. We present a different argument because it can be naturally generalized to handle more sophisticated estimation procedures. For arbitrary $\eta$ define 
\begin{equation}
    f_{it}(\eta) = g(\uww_i, X_i,\eta) +\psi_{it}^{\top}\eta.\\ 
\end{equation}
Consider a lower bound on individual components of the loss function:
\begin{multline}
\rho_{W_{it}}(W_{it} -  f_{it}(\eta)) = (W_{it} -  f_{it}(\eta))^2\left(1-W_{it}\{W_{it}\le f_{it}(\eta)\}\right) = \\  (W_{it} -  f_{it}(\eta))^2\left(1-W_{it}\{W_{it} \le  f_{it}(\eta^{\star})\}\right) + \\
(W_{it} - f_{it}(\eta))^2W_{it} \left(\{W_{it} \le f_{it}(\eta^{\star})\} - \{W_{it} \le f_{it}(\eta)\}\right) \ge \\
(W_{it} - f_{it}(\eta))^2\left(1-W_{it}\{W_{it} \le f_{it}(\eta^{\star})\}\right) - \\
(W_{it} - f_{it}(\eta))^2W_{it} \{f_{it}(\eta^{\star})< 1\le f_{it}(\eta)\}.
\end{multline}
Using the properties of the oracle weights we get the following inequality for the excess loss for unit $i$:
\begin{multline}
\sum_{t=1}^T\left(\rho_{W_{it}}(W_{it} - f_{it}(\eta)) - \rho_{W_{it}}(W_{it} - f_{it}(\eta^{\star}))\right) \ge \\
\sum_{t=1}^T\left(f_{it}(\eta) - f_{it}(\eta^{\star}))^2\left(1-W_{it}\{W_{it} \le  f_{it}(\eta^{\star})\}\right)\right) +\\
\sum_{t=1}^T\left( \xi_{it}\psi_{it}^{\top}(\eta^{\star}-\eta)\left(1-W_{it}\{W_{it} \le f_{it}(\eta^{\star})\}\right)\right) -\\
\sum_{t=1}^T\left( (W_{it} - f_{it}(\eta))^2W_{it} \{f_{it}(\eta^{\star})< 1\le f_{it}(\eta)\}\right)
\end{multline}
where the last inequality follows from the fact that $\sum_{t=1}^T \xi_{it}(1-W_{it}\{W_{it} - f_{it}(\eta^{\star})\le0\})= 0$. Using this and Assumption \ref{as:tech_2} we get a lower bound:
\begin{multline}
\mathbb{P}_n\sum_{t=1}^T(1-W_{it}\{W_{it} \le f_{it}(\eta^{\star})\})(f_{it}(\eta)-f_{it}(\eta^{\star}))^2 \ge \mathbb{P}_n\sum_{t=1}^T(1-W_{it})(f_{it}(\eta)-f_{it}(\eta^{\star}))^2 \ge\\
 (\eta- \eta^{\star})^\top\left(\sum_{t=1}^T\mathbb{P}_n (1-W_{it})\Gamma_{it}\Gamma_{it}^{\top}\right)(\eta- \eta^{\star}) =  
\kappa\|\eta- \eta^{\star}\|_2^2 + o_p(\|\eta - \eta^{\star}\|_2^2).
\end{multline}
where we use Assumption \ref{as:tech_1} and the law of large numbers which holds because $\Gamma_{it}$ is bounded.

Suppose that $\|\eta - \eta^{\star}\|_2^2 = r^2$,  which by Lemma \ref{lemma:prop_ex} implies that $|g(\uww_i,X_i, \eta) -g(\uww_i,X_i, \eta^{\star})| \le C_1r$. Compactness guarantees that $\psi_{it}$ is bounded and thus $\sum_{t=1}^T\|f_t(\eta) - f_t(\eta^{\star})\|_{\infty} \le C_2r$. Using CS we get the following inequality:
\begin{multline}
 \mathbb{P}_n\xi_{it}\psi_{it}^{\top}(\eta^{\star}-\eta)\left(1-W_{it}\{W_{it}\le f_{it}(\eta^{\star})\}\right) \le \\
 \|\eta^{\star} - \eta\|_2 \times \left\|  \mathbb{P}_n\xi_{it}\psi_{it}\left(1-W_{it}\{W_{it}\le f_{it}(\eta^{\star})\}\right)\right\|_2 = \\
 \|\eta^{\star} - \eta\|_2 \times O_{p}\left(\frac{1}{\sqrt{n}}\right)
\end{multline}
We also have the following inequality:
\begin{multline}
 \mathbb{P}_n\left[\frac{1}{T}\sum_{t=1}^T(W_{it} - f_{it}(\eta))^2W_{it} \{f_{it}(\eta^{\star})< 1\le f_{it}(\eta)\}\right]\le \\
 \mathbb{P}_n\left[\frac{1}{T}\sum_{t=1}^T(f_{it}(\eta^{\star})< 1\le f_{it}(\eta))^2\{f_{it}(\eta^{\star})< 1\le f_{it}(\eta)\}\right] \le\\
 \|f^{\star} - f\|^2_{\infty}\times \mathbb{P}_n\left[\frac{1}{T}\sum_{t=1}^T\{f_{it}(\eta^{\star})< 1\le f_{it}(\eta)\}\right] 
 \end{multline}
 where the first implication follows because of the indicator, and the the second one follows by Holder inequality.
 Since $\|f^{\star} - f\|_{\infty} \le C_2 r$ we have the following:
 \begin{equation}
\mathbb{P}_n\left[\frac{1}{T}\sum_{t=1}^T\{f_{it}(\eta^{\star})< 1\le f_{it}(\eta)\}\right] \le   \mathbb{P}_n\left[\frac{1}{T}\sum_{t=1}^T\{f_{it}(\eta^{\star})< 1\le f_{it}(\eta^{\star}) + C_2 r\}\right] 
 \end{equation}
 DKW inequality implies that we have the following with high probability:
 \begin{multline}
\mathbb{P}_n\left[\frac{1}{T}\sum_{t=1}^T\{f_{it}(\eta^{\star})< 1\le f_{it}(\eta^{\star}) + C_2 r\}\right] \le \\
\mathbb{E}\left[\frac{1}{T}\sum_{t=1}^T\{f_{it}(\eta^{\star})< 1\le f_{it}(\eta^{\star}) + C_2 r\}\right] + \frac{C_3}{\sqrt{n}} = O_p\left(\frac{1}{\sqrt{n}}\right)
\end{multline}
It is now easy to see that if $r$ is greater than $O\left(\frac{1}{\sqrt{n}}\right)$ then the excess loss is positive with high probability. Since the loss function is convex this implies that optimum should belong to a ball of radius $\frac{1}{\sqrt{n}}$ around $\eta^{\star}$  with high probability which proves that for all $t$ $\|\hat\weight_{t}^{(un)} - (\weight_{t}^{\star})^{un}\|_2 =O_p\left(\frac{1}{\sqrt{n}}\right) =  o_p(1)$.
\end{proof}

\begin{lemma}
Let $R \equiv \sum_{t=1}^T\mathbb{E}(W_{it} - \mathbb{E}[W_{it}|S_i,X_i])^2$ and suppose Assumptions \ref{as:tech_1} and \ref{as:tech_2} hold. Then $R>0$ and:
\begin{equation}
\min_{\{\alpha_i\}_{i=1}^n,\eta}\sum_{t=1}^T \mathbb{P}_n \rho_{W_{it}}(W_{it} - \alpha_i - \psi_{it}^\top\eta) \ge R + o_p(1)
\end{equation}
\end{lemma}
\begin{proof}
The first statement follows from definition of $R$ and overlap assumption. The argument from the previous lemma shows that $\hat R\equiv\min_{\alpha_i,\eta}\sum_{t=1}^T \mathbb{P}_n \rho_{W_{it}}(W_{it} - \alpha_i - \psi_{it}^\top\eta)$ is bounded from below:
\begin{multline}
    \hat R \ge  \sum_{t=1}^T \mathbb{P}_n \rho_{W_{it}}(W_{it} -g(\uww_i,X_i,\eta^{\star}) - \psi_{it}^\top\eta^{\star}) + O_p\left(\frac{1}{n}\right) =\\
    \sum_{t=1}^T \mathbb{E}\left[\rho_{W_{it}}(W_{it} -g(\uww_i,X_i,\eta^{\star}) - \psi_{it}^\top\eta^{\star})\right] + o_p(1),
\end{multline}
thus we only need to bound the expectation. To this end, consider $f_t^{\star}:= \mathbb{E}[W_{it}|X_i,S_i]$, the argument below shows that this function minimizes the risk function in a large class of functions and by assumption its risk is strictly positive. Let  $\mathcal{F}:=\{ f \in L_2(\mathcal{P})^T: f_t = g(\uww_i , X_i) +h_t(S_i,X_i), g, h_t \in L_{\infty}(\mathcal{P})\}$ and observe that both $f^{\star}$ and $f(\eta^{\star})$ belong to this class. Take any function $f \in \mathcal{F}$ and consider a convex combination $f(\lambda):= f^{\star} + \lambda(f- f^{\star})$. Because $f_t\in L_{\infty}(\mathcal{P})$  and $f_{t}^{\star} \le 1- \eta$ it follows that for all $\lambda < \lambda_0$ we have $f_t(\lambda) <1$ almost surely. For any $\lambda < \lambda_0$ we have that $R(f(\lambda)) = \mathbb{E}\left[\frac{1}{T}\sum_{t=1}^T(W_{t} - f^{\star}_t)^2\right] + \mathbb{E}\left[\sum_{t=1}^T(f_t^{\star} - f_t(\lambda))^2\right]> R(f^{\star})$. The second equality here follows because $S_i$ includes $\overline W_{i}$ and thus $\frac{1}{T}\sum_{t=1}^Tf_{t}^{\star} = \overline{W}_i$. 
\end{proof}

\begin{lemma}\label{lemma:ex_sol}
Suppose that conditions of the previous Lemma hold; then with probability approaching one the primal problem  has a unique solution and the strong duality holds.
\end{lemma}
\begin{proof}
Direct application of Generalized Farkas' lemma implies that the constraint set is empty iff there exist $(\alpha^{\star}_i, \beta^{\star}_t,\gamma^{\star})$ such that the following is true:
\begin{equation}
\begin{aligned}
&\alpha_{i}^{\star} + \psi^\top_{it} \eta^{\star} \ge 0\\
&W_{it} = \{\alpha_{i}^{\star} + \psi_{it}^\top \eta^{\star} >0\}
\end{aligned}
\end{equation}
Such  $(\alpha^{\star}_i, \eta^{\star})$ exist iff $\min_{\alpha_i,\eta}\sum_{t=1}^T \mathbb{P}_n \rho_{W_{it}}(W_{it} - \alpha_i - \psi_{it}^\top\eta) = 0$ which by previous lemma does not happen with probability approaching one. Thus the constraint set is not empty and convex. Since the objective function is strictly convex we have that the primal problem has the unique solution. Since all the inequality constrains are affine strong duality holds (see 5.2.3 in \cite{boyd2004convex}) and we have the result.
\end{proof}

\subsection{Theorems}
\textbf{Proof of Theorem \ref{th:main_inf_theorem}}: Existence of $\hat \gamma_{it}$ follows from \ref{lemma:ex_sol}. The estimator has the following representation:
\begin{multline}
\hat \tau - \tau_{cond} =  \frac{1}{NT}\sum_{it}\hat \gamma_{it} \epsilon_{it}=\\\frac{1}{\mathbb{P}_n \frac{1}{T} \sum_{t=1}^T \hat \weight_{it}^{un}  W_{it}}
 \left(\frac{1}{nT} \sum_{it} (\weight^{\star}_{it})^{un}\epsilon_{it} +  \frac{1}{nT} \sum_{it} (\hat \weight_{it}^{un} -(\weight_{it}^{\star})^{un})\epsilon_{it}\right)
\end{multline}
By assumption we have the following:
\begin{multline}
\mathbb{E}[ (\hat \weight_{it}^{un} -(\weight_{it}^{\star})^{un})\epsilon_{it}|\{\uww_j, X_j\}_{j=1}^n] = (\hat \weight_{it}^{un} -(\weight_{it}^{\star})^{un})\mathbb{E}[\epsilon_{it}|\{\uww_j, X_j\}_{j=1}^n] = \\
 (\hat \weight_{it}^{un} -(\weight_{it}^{\star})^{un})\mathbb{E}[\epsilon_{it}|\uww_i, X_i] = 0
\end{multline}
This implies that by conditional Chebyshev inequality we have the following:
\begin{multline}
\zeta_n(\epsilon)\equiv\mathbb{E}\left[\left\{\sqrt{n}\left|\mathbb{P}_n\frac{1}{T}\sum_{t=1}^T (\hat \weight_{it}^{un} -(\weight_{it}^{\star})^{un})\epsilon_{it}\right|\ge \delta\right\}|\{\uww_j, X_j\}_{j=1}^n\right] \le\\
 \frac{\mathbb{P}_n\mathbb{E}\left[\left(\sum_{t=1}^T( \hat \weight_{it}^{un} -(\weight_{it}^{\star})^{un})\epsilon_{it}^2\right)^2|\{\uww_j, X_j\}_{j=1}^n \right]}{T^2\delta^2}\le\frac{\overline{\sigma}^2_u}{T\epsilon^2} \|(\weight^{\star})^{un}- \hat \weight^{un}\|_2^2= o_p(1)
\end{multline}
Since indicator is a bounded function it follows that for any $\delta>o$
\begin{equation}
\mathbb{E}[\zeta_n(\epsilon)] = o(1)
\end{equation}
and thus we have $\frac{1}{nT} \sum_{it}((\weight_{it}^{\star})^{un}- \hat \weight_{it}^{un})\epsilon_{it} = o_p\left(\frac{1}{\sqrt{n}}\right)$. Finally we need to check that CLT applies to $  \frac{1}{nT} \sum_{it} (\weight^{\star}_{it})^{un}\epsilon_{it}$. The mean of each summand is zero and the variance is bounded:
\begin{equation}
\mathbb{E}\left[\left( \frac{1}{T} \sum_{t=1}^T (\weight^{\star}_{it})^{un}\epsilon_{it}\right)^2\right] \le \frac{1}{T} \sum_{t=1}^T \mathbb{E}\left[\left( (\weight^{\star}_{it})^{un}\epsilon_{it}\right)^2\right] \le  \sum_{t=1}^T \sqrt{\mathbb{E}[\epsilon_{it}^4]\mathbb{E}[( (\weight^{\star}_{it})^{un})^4]} <\infty
\end{equation}
Finally, define:
\begin{equation}
\weight_{it}^{\star}\equiv \frac{(\weight_{it}^{\star})^{un}}{ \mathbb{E}\left[ \frac{1}{T} \sum_{t=1}^T  (\weight_{it}^{\star})^{un} W_{it} \right] }
\end{equation}
From Lemma \ref{lemma:emp_proc} we get
\begin{equation}
\mathbb{P}_n \frac{1}{T} \sum_{t=1}^T \hat \weight_{it}^{un} W_{it} = \mathbb{E}\left[ \frac{1}{T} \sum_{t=1}^T  (\weight_{it}^{\star})^{un} W_{it} \right] +o_p(1)
\end{equation}
and the result follows.

\textbf{Proof of Theorem \ref{th:bootstrap}}:
For each observation $i$ define $M_i$ -- the number of times this observation is sampled in a bootstrap sample. Using this notation we can define bootstrap analogs of $\alpha_i$ and $\gamma$ from the proof of Lemma \ref{lemma:emp_proc}:
\begin{equation}
 \eta^{(b)} = \arg\min\mathbb{P}_n M_i \frac{1}{T} \sum_{t=1}^T\rho_{W_{it}}(W_{it} -g(\uww_i, X_i, \eta)- \psi_{it}^T\eta).
\end{equation}
 It is straightforward to extend the proof of Lemma \ref{lemma:emp_proc} and show that bootstrap weights converge to population ones. Most part follow because of  two key properties of $\{M_i\}_{i=1}^n$:
\begin{equation}
\begin{aligned}
&\mathbb{P}_n M_i X_i = \mathbb{E}[X_i] + o_p(1)\\
&\mathbb{P}_n M_i \epsilon_i = O_p\left(\frac{1}{\sqrt{n}}\right)
\end{aligned}
\end{equation}
for any square integrable $X_i$ and any square integrable mean-zero $\epsilon_i$ (all independent of $M_i$). The second inequality follows by applying Chebyshev inequality, the first one follows from the second one. The only additional result that we need is the following one:
\begin{multline}
\mathbb{P}_nM_i\left[\frac{1}{T}\sum_{t=1}^T\{f_{it}^{\star}< 1\le f_{it}^{\star} + C_2 r\}\right] = \mathbb{P}_n(M_i-1)\left[\frac{1}{T}\sum_{t=1}^T\{f_{it}^{\star}< 1\le f_{it}^{\star} + C_2 r\}\right] + \\
 \mathbb{P}_n\left[\frac{1}{T}\sum_{t=1}^T\{f_{it}^{\star}< 1\le f_{it}^{\star} + C_2 r\} - \mathbb{E}\left[\frac{1}{T}\sum_{t=1}^T\{f_{it}^{\star}< 1\le f_{it}^{\star} + C_2 r\}\right] \right] + \\
 \mathbb{E}\left[\frac{1}{T}\sum_{t=1}^T\{f_{it}^{\star}< 1\le f_{it}^{\star} + C_2 r\}\right] =  \mathbb{E}\left[\frac{1}{T}\sum_{t=1}^T\{f_{it}^{\star}< 1\le f_{it}^{\star} + C_2 r\}\right]  + O_p\left(\frac{1}{\sqrt{n}}\right)
\end{multline}
where the last line follows by DKW inequality, the fact that the set of intervals is Donsker, and the multiplier process converges to same limit process as the standard empirical one. 
It follows that we have convergence results:
\begin{equation}
\begin{aligned}
%&\|\weight^{(b)} - \weight^{\star}\|_{\infty} = o_p(1)\\
&\|\weight^{(b)} - \weight^{\star}\|_2 = O_p\left(\frac{1}{\sqrt{n}}\right).
\end{aligned}
\end{equation}
By construction of bootstrap estimator we have the following representation:
\begin{multline}
\hat \tau^{(b)} - \hat \tau = \mathbb{P}_n\frac{1}{T}\sum_{t=1}^T\left(M_i\weight_{it}^{(b)} - \hat\weight_{it}\right)W_{it}\tau_t(\uww_i, X_i) +
 \mathbb{P}_n \frac{1}{T}\sum_{t=1}^T(M_i \weight_{it}^{(b)} -\hat\weight_{it}) \epsilon_{it} = \\
 \mathbb{P}_n\frac{1}{T}\sum_{t=1}^T\left(M_i\weight_{it}^{(b)} - \hat\weight_{it}\right)W_{it}\tau_t(\uww_i, X_i) +
 \mathbb{P}_n \frac{1}{T}\sum_{t=1}^T \gamma_{it}^{\star} \epsilon_{it} + o_p\left(\frac{1}{\sqrt{n}}\right)
 \end{multline}
From this representation it follows that if $\tau_{it} = const$ then the bootstrap estimator is consistent for the asymptotic variance of $\hat \tau$. Otherwise, since the second summand is uncorrelated with the first one we have that the bootstrap variance is a conservative estimator of the correct variance.

\end{appendix}

\end{document}

% --- supplement: Doubly Robust Identification for Causal Panel Data Models/versions/appendix.tex ---

\begin{appendix}
\section{Appendix}
\setcounter{equation}{0}

\subsection{Propositions}
\textbf{Proof of Proposition \ref{pr:id_model}}:
For any $\weight \in \mmw_\outcome$ we defined the random variables
\begin{equation}
\begin{aligned}
&\weight_{it} \equiv \sum_{k=1}^{K} \weight_{kt} \{\uww_{i} = \mbw_{k}\}\\
\end{aligned}
\end{equation}
and considered the following estimator:
\begin{equation}
\tau(\weight) = \mme\left[\frac{1}{T}\sum_{t=1}^TY_{it} \weight_{it} \right]
\end{equation}
By assumption we have the representation:
\begin{multline}
 \mme\left[\frac{1}{T}\sum_{t=1}^TY_{it} \weight_{it} \right] =  \mme\left[\frac{1}{T}\sum_{t=1}^T(\alpha(U_i) + \lambda_t + \tau_t(U_i)W_{it} + \varepsilon_{it}) \weight_{it} \right] = \\
  \mme\left[\frac{1}{T}\sum_{t=1}^T(\alpha(U_i) + \lambda_t + \tau(U_i)W_{it} + \varepsilon_{it})  \sum_{k=1}^{K} \weight_{kt} \{\uww_{i} = \mbw_{k}\} \right] \\
  =  \mme\left[\frac{1}{T}\sum_{t=1}^T(\alpha(U_i)\weight_{kt} \{\uww_{i} = \mbw_{k}\}) \right] +\\
   \frac{1}{T}\sum_{t=1}^T \lambda_t\sum_{k=1}^K\mme\left[\weight_{kt} \{\uww_{i} = \mbw_{k}\} \right] +   \mme\left[\frac{1}{T} \sum_{k=1}^{K}\sum_{t=1}^T \tau_{t}(U_i)\{\uww_{i} = \mbw_{k}\}\mbw_{kt} \weight_{kt}  \right] = \\
 \frac{1}{T}\sum_{t=1}^T\lambda_t \sum_{k=1}^{K} \pi_k \weight_{kt} +  \sum_{t=1}^T\mme\left[\tau_t(U_i)\xi_t(\uww_i)  \right] =  \sum_{t=1}^T \mme\left[\tau_t(U_i)\xi_t(\uww_i)  \right]
\end{multline}
where $\xi_t(\uww_i)\equiv \sum_{k=1}^{K}\{\uww_{i} = \mbw_{k}\} \mbw_{kt} \weight_{kt}\ge0$. The first equality follows from the restrictions on the outcome model, the second -- by definition of the weights, the third -- because $\mathbb{E}[\varepsilon_i|U_i] = 0$ and strict exogeneity assumption; finally the last two equalities follow by construction of weights. By construction we also have that $\xi_t(\uww_i)\ge0$ and $\mathbb{E}[\sum_{t=1}^T\xi_t(\uww_i)] = 1$.\\

\textbf{Proof of Proposition \ref{pr:op_unc}}: For arbitrary $\underline w$ and measurable $A_0, A_1$ we need to demonstrate conditional independence::
\begin{multline}
\mathbb{E}[\{\uww_i = \underline w\}\{\underline Y_{i}(0) \in A_0, \underline Y_{i}(1) \in A_1\}|S_i] =\\ \mathbb{E}[\{\uww_i = \underline w\}|S_i]\mathbb{E}[\{\underline Y_{i}(0) \in A_0, \underline Y_{i}(1) \in A_1\}|S_i]
\end{multline}
Consider a chain of equalities:
\begin{multline}
\mathbb{E}[\{\uww_i = \underline w\}\{\underline Y_{i}(0) \in A_0, \underline Y_{i}(1) \in A_1\}|S_i] = \\
\mathbb{E}[\mathbb{E}[\{\uww_i = \underline w\}\{\underline Y_{i}(0) \in A_0, \underline Y_{i}(1) \in A_1\}|S_i,U_i]|S_i] = \\
\mathbb{E}[\mathbb{E}[\{\uww_i = \underline w\}|S_i,U_i]\mathbb{E}[\{\underline Y_{i}(0) \in A_0, \underline Y_{i}(1) \in A_1\}|S_i,U_i]|S_i] =\\
\mathbb{E}[\mathbb{E}[\{\uww_i = \underline w\}|S_i]\mathbb{E}[\{\underline Y_{i}(0) \in A_0, \underline Y_{i}(1) \in A_1\}|S_i,U_i]|S_i] =\\
 \mathbb{E}\{\uww_i = \underline w\}|S_i]\mathbb{E}[\{\underline Y_{i}(0) \in A_0, \underline Y_{i}(1) \in A_1\}|S_i]
\end{multline}
where the second equality follows by strict exogeneity and the fact that $S_i$ is a measurable functionon $\uww_i$, and the third one follows from sufficiency.\\

\textbf{Proof of Proposition \ref{pr:id_design}}:
The proof is similar to Proposition \ref{pr:id_model} and is omitted. 

\subsection{Dual representation}\label{sec:dual}
The Lagrangian saddle-point problem for the program (\ref{primal_problem}) has the following form:
\begin{multline}
\inf_{\weight_{it}}\sup_{\lambda_{(t)}, \lambda_{(i)}, \eta, \mu_{it}\ge 0, \pi\ge 0}\frac{1}{(NT)^2}\sum_{it} \weight_{it}^2  + \frac{1}{N} \sum_{i} \lambda_{(i)}\left(\frac{1}{T}\sum_{i} \weight_{it}\right) +\\
\frac{1}{T} \sum_{t}\lambda_{(t)} \left(\frac{1}{N}\sum_{t} \weight_{it}\right) +\pi\left(1-\frac{1}{NT}\sum_{it} \weight_{it} W_{it} \right) -\\
 \eta^\top\left(\frac{1}{NT}\sum_{it} \weight_{it} \tilde\psi_{it}\right) - \frac{1}{NT}\sum_{it}\mu_{it}\weight_{it}W_{it}
\end{multline}
where we use $\tilde\psi_{it}$ as a shorthand for $\psi(X_i, S_i, t)$. In Lemma \ref{lemma:ex_sol} we show that strong duality holds and we can rearrange the minimization and maximization: 
\begin{multline}
\sup_{\lambda_{(t)}, \lambda_{(i)}, \eta,\mu_{it}\ge 0, \pi\ge 0} \inf_{\weight_{it}}\frac{1}{(NT)^2}\sum_{it} \weight_{it}^2 + \frac{1}{N} \sum_{i} \lambda_{(i)}\left(\frac{1}{T}\sum_{i} \weight_{it}\right) + \\
\frac{1}{T} \sum_{t} \lambda_{(t)}\left(\frac{1}{N}\sum_{t} \weight_{it}\right) - \pi\left(\frac{1}{NT}\sum_{it} \weight_{it} W_{it} -1\right) - \\
\eta^\top\left(\frac{1}{NT}\sum_{it} \weight_{it} \tilde\psi_{it}\right) - \frac{1}{NT}\sum_{it}(\mu_{it}\weight_{it}W_{it})
\end{multline}
Solving this in terms of $\weight_{it}$ (an unconstrained quadratic problem) we get the following representation:
\begin{equation}
\inf_{\lambda_{(t)}, \lambda_{(i)}, \eta,\mu_{it}\ge 0, \pi\ge 0}\mathbb{P}_n\left[\frac{1}{T}\sum_{t=1}^T\left( \pi W_{it} -\lambda_{(t)}-\lambda_{(i)} -\eta^\top \tilde\psi_{it} - \mu_{it}W_{it}\right)^2\right] 
- \frac{4\pi}{N}
\end{equation}
We can further simplify this expression by concentrating out $\mu_{it}$ and $\pi$. To this end, define the following loss function:
\begin{equation}
\rho_z(x)\equiv x^2(1-z) + x_{+}^2z
\end{equation}
After some algebra we get the following:
\begin{equation}
\inf_{\lambda_{(t)}, \lambda_{(i)}, \eta}  \mathbb{P}_n\left(\frac{1}{T}\sum_{t=1}^T\rho_{W_{it}}\left(W_{it} - \lambda_{(t)}-\lambda_{(i)}- \eta^\top \tilde\psi_{it}\right)\right)
\end{equation}
Let $\{\hat\lambda_{(t)}, \hat \lambda_{(i)},\hat\eta\}_{i,t}$ be the solutions to this problem. The optimal unnormalized weights are equal to the following:
\begin{equation}
\hat \weight_{it}^{(un)} =\left(W_{it} - \hat\lambda_{(t)}-\hat\lambda_{(i)}- \hat\eta^\top \tilde\psi_{it}\right)(1-W_{it}) + 
\left(W_{it} - \hat\lambda_{(t)}-\hat\lambda_{(i)}- \hat\eta^\top \tilde\psi_{it}\right)_{+}W_{it}
\end{equation}
and the optimal weights are given by the normalization:
\begin{equation}
\hat \weight_{it}\equiv \frac{\hat \weight_{it}^{(un)}}{\frac{1}{NT}\sum_{it} \hat \weight_{it}^{(un)} W_{it}}
\end{equation}
By construction the weights are non-negative for the treated units and sum up to one once multiplied by $W_{it}$. The denominator is strictly positive under the conditions of Lemma \ref{lemma:ex_sol}.

\subsection{Lemmas}

In the following analysis we use $\psi_{it}$ (or $\psi_t$ for a generic value) as a shorthand for $T+2p$-dimensional vector that includes time fixed effects and all relevant covariates. For arbitrary $\eta$ and $\uww$ such that $\overline{W} <1$ define a function $g(X, \uww, \eta)$:
\begin{equation}\label{eq:g_def}
g(X, \uww, \eta)= \argmin_{\alpha}\left\{\frac{1}{T} \sum_{t=1}^T \rho_{W_t}(W_{t} - \alpha - \psi^{\top}_{t}\eta)\right\}
\end{equation}
The following lemma shows that $g$ is well-defined, and demonstrates that it is well-behaved in $\eta$ as long as $\psi_t$ is bounded.
\begin{lemma}\label{lemma:prop_ex}
Function $g(X, \uww, \eta)$ is uniquely defined, if $\|\psi_t\|_{\infty}<K$ then $g(X, \uww, \eta)$ is $\mathcal{P}$ a.s. uniformly (in $(X, \uww)$) Lipschitz in $\eta$. 
\end{lemma}
\begin{proof}
If $\overline W <1$ then the minimized function is strictly convex with a unique minimum. Define $h_t\equiv W_t-\psi_t^{\top}\eta$; and let $\tilde h_{(1)}, \dots, \tilde h_{(\sum_{t=1}^TW_t)}$ be the decreasing ordering of $h_t$ for units with $W_{t} = 1$; let $\tilde h_{(0)} = 0$. For $k = 0, \dots, \sum_{t=1}^TW_t$ define the following functions:
\begin{equation}
g_{k}(X, \uww, \eta) \equiv  \frac{\sum_{t=1}^T(1-W_{it})h_t + \sum_{l=0}^k \tilde h_{(l)}}{\sum_{t=1}^T(1-W_{it}) +k}\\
\end{equation}
It is easy to see that we have the following:
\begin{equation}
g(X, \uww, \eta) = g_{0}(X, \uww, \eta) + \sum_{l=1}^{\sum_{t=1}^TW_t}  \{\tilde h_{(l)}\ge g_{(l-1)}\}(g_{l}(X, \uww, \eta) - (g_{l-1}(X, \uww, \eta) )
\end{equation}
From this representation if follows that $g(X, \uww, \eta)$ is locally linear in $\eta$, and thus if $\psi_t$ is $\mathcal{P}$-a.s. bounded, then it is uniformly (in $(X, \uww)$) Lipschitz in $\eta$. 
\end{proof}

For future use we define $g(X, \uww, \eta) \equiv \infty$ for $\uww$ such that $\overline W = 1$. It is straightforward to see that with this definition (\ref{eq:g_def}) holds for all $\uww$ (of course, minimum is not unique if $\overline W = 1$).

\begin{lemma}\label{lemma:exist}
Suppose Assumptions \ref{as:tech_1} are satisfied. Then there exist a vector $\eta^{\star}\in \mathbb{R}^{T+2p}$ such that the following conditions are satisfied:
\begin{equation}
\begin{aligned}
&\xi_{it} \equiv W_{it} - g(X, \uww, \eta^{\star}) - \psi_{it}^\top \eta^{\star},\\
&\mathbb{E}\left[ \sum_{t=1}^T\xi_{it}\psi_{it}(1-W_{it}\{W_{it} \le g(X, \uww, \eta^{\star}) +\psi_{it}^{\top}\eta^{\star}\})\right] = 0,\\
&\sum_{t=1}^T \xi_{it}(1-W_{it}\{W_{it} \le g(X, \uww, \eta^{\star}) +\psi_{it}^{\top}\eta^{\star}\})= 0.
\end{aligned}
\end{equation}
\end{lemma}
\begin{proof}
Define $\mathcal{F}\equiv \{f = (f_{1},\dots f_{T}): f_{t} = h_0 +\psi_{t}^\top \eta, h_0 \in L_{2}(\mathcal{P}_{mar}), \eta \in \mathbb{R}^{2p}\}$. For any $f\in \mathcal{F}$ define the risk function:
\begin{equation}
R(f) \equiv \mathbb{E}\left[\frac{1}{T} \sum_{t=1}^T \rho_{W_{it}}(W_{it} -f_t(\uww_i, X_i))\right]
\end{equation}
By construction it satisfies the following bound:
\begin{equation}
    R(f) \ge \eta^\top\mathbb{E}\left[\frac{1}{T} \sum_{t=1}^T(1-W_{it})\Gamma_{it}\Gamma_{it}^\top \right]\eta \ge \kappa \|\eta\|_2^2
\end{equation}
which implies that we can focus our attention to $\|\eta\| \le C$, where $C$ is some constant. 

Using Lemma \ref{lemma:prop_ex} we restrict attention to $f \in \tilde{\mathcal{F}} = \{f \in \mathcal{F}:f_t = g(X_i, \uww_i,\eta) + \psi_{it}^\top\eta\}$. By construction for any $f \in  \tilde{\mathcal{F}}$ we have the following
\begin{multline}
    \frac{1}{T} \sum_{t=1}^T \rho_{W_{it}}(W_{it} -f_t(\uww_i, X_i)) \le   \frac{1}{T} \sum_{t=1}^T (W_{it} -\psi_{it}^\top \eta)^2 \le \\
    2 +  2\eta^\top\left(\frac{1}{T}\sum_{t=1}^T\psi_{it}\psi_{it}^\top\right)\eta \le 2 + \frac{2}{T}\sum_{t=1}^T\|\psi_{it}\|_2^2C^2
\end{multline}
which allows us to use dominated convergence theorem and conclude that $R(f)$ is continuous as a function of $\eta$. Invoking Weierstrass theorem we thus conclude that there exists $\eta^{\star}$ such that the following holds:
\begin{equation}
    f(\eta^{\star}) = \argmin_{f\in \tilde{\mathcal{F}}}R(f)
\end{equation}
where $f(\eta^{\star}) \equiv g(X_i, \uww_i,\eta^{\star}) + \psi_{it}^\top\eta^{\star}$. Moment conditions then follow from the first order conditions. 
\end{proof}

For future use define $(\gamma^{\star})^{un} \equiv \xi_{it}(1-W_{it}) + (\xi_{it})_{+}W_{it}$ -- unnormalized oracle weights.

\begin{lemma}\label{lemma:emp_proc}
Suppose Assumption  \ref{as:tech_1} is satisfied. Then $\|(\weight^{\star})^{un}- \hat \weight^{un}\|_2= o_p(1)$.
\end{lemma}
\begin{proof}
We use the dual representation derived in Section \ref{sec:dual} and show that the solution converges to a population one. We use a natural adaptation of the ``small-ball'' argument from \cite{mendelson2014learning}. This is not necessary and one can construct a simpler proof using standard results for GMM estimators. We present a different argument because it can be naturally generalized to handle more sophisticated estimation procedures. For arbitrary $\eta$ define 
\begin{equation}
    f_{it}(\eta) = g(\uww_i, X_i,\eta) +\psi_{it}^{\top}\eta.\\ 
\end{equation}
Consider a lower bound on individual components of the loss function:
\begin{multline}
\rho_{W_{it}}(W_{it} -  f_{it}(\eta)) = (W_{it} -  f_{it}(\eta))^2\left(1-W_{it}\{W_{it}\le f_{it}(\eta)\}\right) = \\  (W_{it} -  f_{it}(\eta))^2\left(1-W_{it}\{W_{it} \le  f_{it}(\eta^{\star})\}\right) + \\
(W_{it} - f_{it}(\eta))^2W_{it} \left(\{W_{it} \le f_{it}(\eta^{\star})\} - \{W_{it} \le f_{it}(\eta)\}\right) \ge \\
(W_{it} - f_{it}(\eta))^2\left(1-W_{it}\{W_{it} \le f_{it}(\eta^{\star})\}\right) - \\
(W_{it} - f_{it}(\eta))^2W_{it} \{f_{it}(\eta^{\star})< 1\le f_{it}(\eta)\}.
\end{multline}
Using the properties of the oracle weights we get the following inequality for the excess loss for unit $i$:
\begin{multline}
\sum_{t=1}^T\left(\rho_{W_{it}}(W_{it} - f_{it}(\eta)) - \rho_{W_{it}}(W_{it} - f_{it}(\eta^{\star}))\right) \ge \\
\sum_{t=1}^T\left(f_{it}(\eta) - f_{it}(\eta^{\star}))^2\left(1-W_{it}\{W_{it} \le  f_{it}(\eta^{\star})\}\right)\right) +\\
\sum_{t=1}^T\left( \xi_{it}\psi_{it}^{\top}(\eta^{\star}-\eta)\left(1-W_{it}\{W_{it} \le f_{it}(\eta^{\star})\}\right)\right) -\\
\sum_{t=1}^T\left( (W_{it} - f_{it}(\eta))^2W_{it} \{f_{it}(\eta^{\star})< 1\le f_{it}(\eta)\}\right)
\end{multline}
where the last inequality follows from the fact that $\sum_{t=1}^T \xi_{it}(1-W_{it}\{W_{it} - f_{it}(\eta^{\star})\le0\})= 0$. Using this and Assumption \ref{as:tech_2} we get a lower bound:
\begin{multline}
\mathbb{P}_n\sum_{t=1}^T(1-W_{it}\{W_{it} \le f_{it}(\eta^{\star})\})(f_{it}(\eta)-f_{it}(\eta^{\star}))^2 \ge \mathbb{P}_n\sum_{t=1}^T(1-W_{it})(f_{it}(\eta)-f_{it}(\eta^{\star}))^2 \ge\\
 (\eta- \eta^{\star})^\top\left(\sum_{t=1}^T\mathbb{P}_n (1-W_{it})\Gamma_{it}\Gamma_{it}^{\top}\right)(\eta- \eta^{\star}) =  
\kappa\|\eta- \eta^{\star}\|_2^2 + o_p(\|\eta - \eta^{\star}\|_2^2).
\end{multline}
where we use Assumption \ref{as:tech_1} and the law of large numbers which holds because $\Gamma_{it}$ is bounded.

Suppose that $\|\eta - \eta^{\star}\|_2^2 = r^2$,  which by Lemma \ref{lemma:prop_ex} implies that $|g(\uww_i,X_i, \eta) -g(\uww_i,X_i, \eta^{\star})| \le C_1r$. Compactness guarantees that $\psi_{it}$ is bounded and thus $\sum_{t=1}^T\|f_t(\eta) - f_t(\eta^{\star})\|_{\infty} \le C_2r$. Using CS we get the following inequality:
\begin{multline}
 \mathbb{P}_n\xi_{it}\psi_{it}^{\top}(\eta^{\star}-\eta)\left(1-W_{it}\{W_{it}\le f_{it}(\eta^{\star})\}\right) \le \\
 \|\eta^{\star} - \eta\|_2 \times \left\|  \mathbb{P}_n\xi_{it}\psi_{it}\left(1-W_{it}\{W_{it}\le f_{it}(\eta^{\star})\}\right)\right\|_2 = \\
 \|\eta^{\star} - \eta\|_2 \times O_{p}\left(\frac{1}{\sqrt{n}}\right)
\end{multline}
We also have the following inequality:
\begin{multline}
 \mathbb{P}_n\left[\frac{1}{T}\sum_{t=1}^T(W_{it} - f_{it}(\eta))^2W_{it} \{f_{it}(\eta^{\star})< 1\le f_{it}(\eta)\}\right]\le \\
 \mathbb{P}_n\left[\frac{1}{T}\sum_{t=1}^T(f_{it}(\eta^{\star})< 1\le f_{it}(\eta))^2\{f_{it}(\eta^{\star})< 1\le f_{it}(\eta)\}\right] \le\\
 \|f^{\star} - f\|^2_{\infty}\times \mathbb{P}_n\left[\frac{1}{T}\sum_{t=1}^T\{f_{it}(\eta^{\star})< 1\le f_{it}(\eta)\}\right] 
 \end{multline}
 where the first implication follows because of the indicator, and the the second one follows by Holder inequality.
 Since $\|f^{\star} - f\|_{\infty} \le C_2 r$ we have the following:
 \begin{equation}
\mathbb{P}_n\left[\frac{1}{T}\sum_{t=1}^T\{f_{it}(\eta^{\star})< 1\le f_{it}(\eta)\}\right] \le   \mathbb{P}_n\left[\frac{1}{T}\sum_{t=1}^T\{f_{it}(\eta^{\star})< 1\le f_{it}(\eta^{\star}) + C_2 r\}\right] 
 \end{equation}
 DKW inequality implies that we have the following with high probability:
 \begin{multline}
\mathbb{P}_n\left[\frac{1}{T}\sum_{t=1}^T\{f_{it}(\eta^{\star})< 1\le f_{it}(\eta^{\star}) + C_2 r\}\right] \le \\
\mathbb{E}\left[\frac{1}{T}\sum_{t=1}^T\{f_{it}(\eta^{\star})< 1\le f_{it}(\eta^{\star}) + C_2 r\}\right] + \frac{C_3}{\sqrt{n}} = O_p\left(\frac{1}{\sqrt{n}}\right)
\end{multline}
It is now easy to see that if $r$ is greater than $O\left(\frac{1}{\sqrt{n}}\right)$ then the excess loss is positive with high probability. Since the loss function is convex this implies that optimum should belong to a ball of radius $\frac{1}{\sqrt{n}}$ around $\eta^{\star}$  with high probability which proves that for all $t$ $\|\hat\weight_{t}^{(un)} - (\weight_{t}^{\star})^{un}\|_2 =O_p\left(\frac{1}{\sqrt{n}}\right) =  o_p(1)$.
\end{proof}

\begin{lemma}
Let $R \equiv \sum_{t=1}^T\mathbb{E}(W_{it} - \mathbb{E}[W_{it}|S_i,X_i])^2$ and suppose Assumptions \ref{as:tech_1} and \ref{as:tech_2} hold. Then $R>0$ and:
\begin{equation}
\min_{\{\alpha_i\}_{i=1}^n,\eta}\sum_{t=1}^T \mathbb{P}_n \rho_{W_{it}}(W_{it} - \alpha_i - \psi_{it}^\top\eta) \ge R + o_p(1)
\end{equation}
\end{lemma}
\begin{proof}
The first statement follows from definition of $R$ and overlap assumption. The argument from the previous lemma shows that $\hat R\equiv\min_{\alpha_i,\eta}\sum_{t=1}^T \mathbb{P}_n \rho_{W_{it}}(W_{it} - \alpha_i - \psi_{it}^\top\eta)$ is bounded from below:
\begin{multline}
    \hat R \ge  \sum_{t=1}^T \mathbb{P}_n \rho_{W_{it}}(W_{it} -g(\uww_i,X_i,\eta^{\star}) - \psi_{it}^\top\eta^{\star}) + O_p\left(\frac{1}{n}\right) =\\
    \sum_{t=1}^T \mathbb{E}\left[\rho_{W_{it}}(W_{it} -g(\uww_i,X_i,\eta^{\star}) - \psi_{it}^\top\eta^{\star})\right] + o_p(1),
\end{multline}
thus we only need to bound the expectation. To this end, consider $f_t^{\star}:= \mathbb{E}[W_{it}|X_i,S_i]$, the argument below shows that this function minimizes the risk function in a large class of functions and by assumption its risk is strictly positive. Let  $\mathcal{F}:=\{ f \in L_2(\mathcal{P})^T: f_t = g(\uww_i , X_i) +h_t(S_i,X_i), g, h_t \in L_{\infty}(\mathcal{P})\}$ and observe that both $f^{\star}$ and $f(\eta^{\star})$ belong to this class. Take any function $f \in \mathcal{F}$ and consider a convex combination $f(\lambda):= f^{\star} + \lambda(f- f^{\star})$. Because $f_t\in L_{\infty}(\mathcal{P})$  and $f_{t}^{\star} \le 1- \eta$ it follows that for all $\lambda < \lambda_0$ we have $f_t(\lambda) <1$ almost surely. For any $\lambda < \lambda_0$ we have that $R(f(\lambda)) = \mathbb{E}\left[\frac{1}{T}\sum_{t=1}^T(W_{t} - f^{\star}_t)^2\right] + \mathbb{E}\left[\sum_{t=1}^T(f_t^{\star} - f_t(\lambda))^2\right]> R(f^{\star})$. The second equality here follows because $S_i$ includes $\overline W_{i}$ and thus $\frac{1}{T}\sum_{t=1}^Tf_{t}^{\star} = \overline{W}_i$. 
\end{proof}

\begin{lemma}\label{lemma:ex_sol}
Suppose that conditions of the previous Lemma hold; then with probability approaching one the primal problem  has a unique solution and the strong duality holds.
\end{lemma}
\begin{proof}
Direct application of Generalized Farkas' lemma implies that the constraint set is empty iff there exist $(\alpha^{\star}_i, \beta^{\star}_t,\gamma^{\star})$ such that the following is true:
\begin{equation}
\begin{aligned}
&\alpha_{i}^{\star} + \psi^\top_{it} \eta^{\star} \ge 0\\
&W_{it} = \{\alpha_{i}^{\star} + \psi_{it}^\top \eta^{\star} >0\}
\end{aligned}
\end{equation}
Such  $(\alpha^{\star}_i, \eta^{\star})$ exist iff $\min_{\alpha_i,\eta}\sum_{t=1}^T \mathbb{P}_n \rho_{W_{it}}(W_{it} - \alpha_i - \psi_{it}^\top\eta) = 0$ which by previous lemma does not happen with probability approaching one. Thus the constraint set is not empty and convex. Since the objective function is strictly convex we have that the primal problem has the unique solution. Since all the inequality constrains are affine strong duality holds (see 5.2.3 in \cite{boyd2004convex}) and we have the result.
\end{proof}

\subsection{Theorems}
\textbf{Proof of Theorem \ref{th:main_inf_theorem}}: Existence of $\hat \gamma_{it}$ follows from \ref{lemma:ex_sol}. The estimator has the following representation:
\begin{multline}
\hat \tau - \tau_{cond} =  \frac{1}{NT}\sum_{it}\hat \gamma_{it} \epsilon_{it}=\\\frac{1}{\mathbb{P}_n \frac{1}{T} \sum_{t=1}^T \hat \weight_{it}^{un}  W_{it}}
 \left(\frac{1}{nT} \sum_{it} (\weight^{\star}_{it})^{un}\epsilon_{it} +  \frac{1}{nT} \sum_{it} (\hat \weight_{it}^{un} -(\weight_{it}^{\star})^{un})\epsilon_{it}\right)
\end{multline}
By assumption we have the following:
\begin{multline}
\mathbb{E}[ (\hat \weight_{it}^{un} -(\weight_{it}^{\star})^{un})\epsilon_{it}|\{\uww_j, X_j\}_{j=1}^n] = (\hat \weight_{it}^{un} -(\weight_{it}^{\star})^{un})\mathbb{E}[\epsilon_{it}|\{\uww_j, X_j\}_{j=1}^n] = \\
 (\hat \weight_{it}^{un} -(\weight_{it}^{\star})^{un})\mathbb{E}[\epsilon_{it}|\uww_i, X_i] = 0
\end{multline}
This implies that by conditional Chebyshev inequality we have the following:
\begin{multline}
\zeta_n(\epsilon)\equiv\mathbb{E}\left[\left\{\sqrt{n}\left|\mathbb{P}_n\frac{1}{T}\sum_{t=1}^T (\hat \weight_{it}^{un} -(\weight_{it}^{\star})^{un})\epsilon_{it}\right|\ge \delta\right\}|\{\uww_j, X_j\}_{j=1}^n\right] \le\\
 \frac{\mathbb{P}_n\mathbb{E}\left[\left(\sum_{t=1}^T( \hat \weight_{it}^{un} -(\weight_{it}^{\star})^{un})\epsilon_{it}^2\right)^2|\{\uww_j, X_j\}_{j=1}^n \right]}{T^2\delta^2}\le\frac{\overline{\sigma}^2_u}{T\epsilon^2} \|(\weight^{\star})^{un}- \hat \weight^{un}\|_2^2= o_p(1)
\end{multline}
Since indicator is a bounded function it follows that for any $\delta>o$
\begin{equation}
\mathbb{E}[\zeta_n(\epsilon)] = o(1)
\end{equation}
and thus we have $\frac{1}{nT} \sum_{it}((\weight_{it}^{\star})^{un}- \hat \weight_{it}^{un})\epsilon_{it} = o_p\left(\frac{1}{\sqrt{n}}\right)$. Finally we need to check that CLT applies to $  \frac{1}{nT} \sum_{it} (\weight^{\star}_{it})^{un}\epsilon_{it}$. The mean of each summand is zero and the variance is bounded:
\begin{equation}
\mathbb{E}\left[\left( \frac{1}{T} \sum_{t=1}^T (\weight^{\star}_{it})^{un}\epsilon_{it}\right)^2\right] \le \frac{1}{T} \sum_{t=1}^T \mathbb{E}\left[\left( (\weight^{\star}_{it})^{un}\epsilon_{it}\right)^2\right] \le  \sum_{t=1}^T \sqrt{\mathbb{E}[\epsilon_{it}^4]\mathbb{E}[( (\weight^{\star}_{it})^{un})^4]} <\infty
\end{equation}
Finally, define:
\begin{equation}
\weight_{it}^{\star}\equiv \frac{(\weight_{it}^{\star})^{un}}{ \mathbb{E}\left[ \frac{1}{T} \sum_{t=1}^T  (\weight_{it}^{\star})^{un} W_{it} \right] }
\end{equation}
From Lemma \ref{lemma:emp_proc} we get
\begin{equation}
\mathbb{P}_n \frac{1}{T} \sum_{t=1}^T \hat \weight_{it}^{un} W_{it} = \mathbb{E}\left[ \frac{1}{T} \sum_{t=1}^T  (\weight_{it}^{\star})^{un} W_{it} \right] +o_p(1)
\end{equation}
and the result follows.

\textbf{Proof of Theorem \ref{th:bootstrap}}:
For each observation $i$ define $M_i$ -- the number of times this observation is sampled in a bootstrap sample. Using this notation we can define bootstrap analogs of $\alpha_i$ and $\gamma$ from the proof of Lemma \ref{lemma:emp_proc}:
\begin{equation}
 \eta^{(b)} = \arg\min\mathbb{P}_n M_i \frac{1}{T} \sum_{t=1}^T\rho_{W_{it}}(W_{it} -g(\uww_i, X_i, \eta)- \psi_{it}^T\eta).
\end{equation}
 It is straightforward to extend the proof of Lemma \ref{lemma:emp_proc} and show that bootstrap weights converge to population ones. Most part follow because of  two key properties of $\{M_i\}_{i=1}^n$:
\begin{equation}
\begin{aligned}
&\mathbb{P}_n M_i X_i = \mathbb{E}[X_i] + o_p(1)\\
&\mathbb{P}_n M_i \epsilon_i = O_p\left(\frac{1}{\sqrt{n}}\right)
\end{aligned}
\end{equation}
for any square integrable $X_i$ and any square integrable mean-zero $\epsilon_i$ (all independent of $M_i$). The second inequality follows by applying Chebyshev inequality, the first one follows from the second one. The only additional result that we need is the following one:
\begin{multline}
\mathbb{P}_nM_i\left[\frac{1}{T}\sum_{t=1}^T\{f_{it}^{\star}< 1\le f_{it}^{\star} + C_2 r\}\right] = \mathbb{P}_n(M_i-1)\left[\frac{1}{T}\sum_{t=1}^T\{f_{it}^{\star}< 1\le f_{it}^{\star} + C_2 r\}\right] + \\
 \mathbb{P}_n\left[\frac{1}{T}\sum_{t=1}^T\{f_{it}^{\star}< 1\le f_{it}^{\star} + C_2 r\} - \mathbb{E}\left[\frac{1}{T}\sum_{t=1}^T\{f_{it}^{\star}< 1\le f_{it}^{\star} + C_2 r\}\right] \right] + \\
 \mathbb{E}\left[\frac{1}{T}\sum_{t=1}^T\{f_{it}^{\star}< 1\le f_{it}^{\star} + C_2 r\}\right] =  \mathbb{E}\left[\frac{1}{T}\sum_{t=1}^T\{f_{it}^{\star}< 1\le f_{it}^{\star} + C_2 r\}\right]  + O_p\left(\frac{1}{\sqrt{n}}\right)
\end{multline}
where the last line follows by DKW inequality, the fact that the set of intervals is Donsker, and the multiplier process converges to same limit process as the standard empirical one. 
It follows that we have convergence results:
\begin{equation}
\begin{aligned}
%&\|\weight^{(b)} - \weight^{\star}\|_{\infty} = o_p(1)\\
&\|\weight^{(b)} - \weight^{\star}\|_2 = O_p\left(\frac{1}{\sqrt{n}}\right).
\end{aligned}
\end{equation}
By construction of bootstrap estimator we have the following representation:
\begin{multline}
\hat \tau^{(b)} - \hat \tau = \mathbb{P}_n\frac{1}{T}\sum_{t=1}^T\left(M_i\weight_{it}^{(b)} - \hat\weight_{it}\right)W_{it}\tau_t(\uww_i, X_i) +
 \mathbb{P}_n \frac{1}{T}\sum_{t=1}^T(M_i \weight_{it}^{(b)} -\hat\weight_{it}) \epsilon_{it} = \\
 \mathbb{P}_n\frac{1}{T}\sum_{t=1}^T\left(M_i\weight_{it}^{(b)} - \hat\weight_{it}\right)W_{it}\tau_t(\uww_i, X_i) +
 \mathbb{P}_n \frac{1}{T}\sum_{t=1}^T \gamma_{it}^{\star} \epsilon_{it} + o_p\left(\frac{1}{\sqrt{n}}\right)
 \end{multline}
From this representation it follows that if $\tau_{it} = const$ then the bootstrap estimator is consistent for the asymptotic variance of $\hat \tau$. Otherwise, since the second summand is uncorrelated with the first one we have that the bootstrap variance is a conservative estimator of the correct variance.

\end{appendix}